\definecolor{myGreen}{rgb}{0, .6, .0}
\definecolor{myorange}{RGB}{251, 122, 56}
\definecolor{goodblue}{HTML}{0071bc}
\definecolor{softblue}{rgb}{0.85, 0.95, 1}
\title{Online Price Competition under Generalized Linear Demands}
\author{%
Daniele Bracale\thanks{Corresponding author.} \\
  Department of Statistics\\
    University of Michigan\\
  Ann Arbor, MI \\
  \texttt{dbracale@umich.edu} \\
   \And
   Moulinath Banerjee \\
   Department of Statistics \\
   University of Michigan \\
   \texttt{moulib@umich.edu} \\
   \AND
   Cong Shi \\
   Department of Management \\
   University of Miami \\
   \texttt{congshi@bus.miami.edu} \\
   \And
   Yuekai Sun \\
   Department of Statistics \\
   University of Michigan \\
   \texttt{yuekai@umich.edu} \\
}
\mathchardef\mhyphen="2D
\newcommand{\nML}{\tt PML}
\newcommand{\ML}{\tt PML \mhyphen GLUCB}
\theoremstyle{plain}
\newtheorem{theorem}{Theorem}[section]
\theoremstyle{plain}
\newtheorem{proposition}[theorem]{Proposition}
\theoremstyle{plain}
\newtheorem{lemma}[theorem]{Lemma}
\theoremstyle{plain}
\theoremstyle{plain}
\theoremstyle{definition}
\theoremstyle{plain}
\newtheorem{assumption}[theorem]{Assumption}
\theoremstyle{remark}
\theoremstyle{plain}
\newtheorem{remark}[theorem]{Remark}
\theoremstyle{plain}
\newtheorem{example}[theorem]{Example}
\begin{document}

\maketitle

\begin{abstract}
We study a sequential price competition among $N$ sellers, each influenced by the pricing decisions of their rivals. Specifically, the demand function for each seller $i$ follows the single index model $\lambda_i(\mathbf p) = \mu_i(\langle \boldsymbol \theta_{i,0}, \mathbf p \rangle)$, with known increasing link $\mu_i$ and unknown parameter $\boldsymbol \theta_{i,0}$, where the vector $\mathbf{p}$ denotes the vector of prices offered by all the sellers simultaneously at a given instant. Each seller observes only their own realized demand — unobservable to competitors — and the prices set by rivals. We propose a novel decentralized policy, PML-GLUCB, that combines penalized MLE with an upper-confidence pricing rule. Our approach (i) \emph{removes the need for coordinated front-loaded exploration phases across sellers} — which is integral to previous models — making our method aligned with realistic market conditions; (ii) generalizes existing approaches that focus solely on linear demand models; (iii) accommodates both binary and real-valued demand observations. Relative to a dynamic benchmark policy, each seller achieves $\widetilde{O}(\sqrt{T})$ regret, which matches the optimal rate known in the linear setting. 
\end{abstract}

\section{Introduction}

A pricing strategy aims to determine prices that maximize current profits. Much of the learning-to-price literature, however, has traditionally focused on monopolistic settings \citep{fan2024policy, javanmard2019dynamic, golrezaei2019dynamic, bracale2025dynamic}, treating demand as independent of competitors’ actions. In many real-world markets, this assumption is unrealistic. Markets typically feature multiple sellers offering products that differ in quality and price, each seeking to maximize its own revenue. In such environments, a seller’s profit is inherently affected by the pricing decisions of its competitors. As a result, firms must continuously revise their prices to anticipate and respond to rivals’ moves. Motivated by these considerations, recent work has moved towards the study of pricing policies in multi-seller environments, explicitly modeling strategic interactions among competing firms \citep{li2024lego, bracale2025revenue, yang2024competitive, goyal2023learning, zhu2025dynamic}. In such settings, each seller seeks to maximize its own revenue, which inherently depends on the pricing policies adopted by its competitors. This interdependence naturally leads to the concept of a Nash equilibrium (NE): a profile of prices at which no seller can improve its payoff through a unilateral deviation, given the strategies of the others. Building on this framework, existing works \citep{li2024lego, bracale2025revenue} propose sequential pricing mechanisms designed to learn equilibrium behavior over time, and establish sublinear regret guarantees relative to the equilibrium benchmark\footnote{In this paper, we use the terms \emph{sublinear regret} and \emph{no regret} interchangeably to indicate that the cumulative regret grows sub-linearly with time, that is, $\mathrm{Reg}(T) = o(T)$ as $T \to \infty$.} and converge to a NE \citep{li2024lego, bracale2025revenue}. Yet two practical issues remain:
\begin{enumerate}[topsep=1pt, itemsep=1pt, parsep=0pt, leftmargin=1.5em]
    \item Existing approaches in the competitive setting \citep{li2024lego, bracale2025revenue} typically decouple exploration from optimization and rely on an \emph{explicit front-loaded exploration phase with an i.i.d. pricing policy}. As noted by \citet{jin2021double}, classical results such as \citet{garivier2016explore} show that the \emph{``explore-then-commit strategy is suboptimal in the asymptotic sense as the horizon grows, and thus, is worse than fully sequential strategies such as Upper Confidence Bound''} (UCB). Moreover, in some dynamic-pricing implementations, the required length of the exploration phase \emph{depends on unknown problem-dependent constants}, which complicates deployment in practice (see e.g., \cite{bracale2025revenue} Remark 5.5).
    \item The demand is often forced to be linear (or narrowly parameterized) \citep{bertsimas2006dynamic, guda2026demand, li2024lego, goyal2023learning}, limiting the ability to capture nonlinear demand patterns commonly observed in real markets \citep{gallego2006price,wan2022nonlinear,wan2023conditions}.
\end{enumerate}

To address these limitations, we adopt a generalized linear (single-index) demand framework, which accommodates a broader range of \emph{nonlinear demand} responses and naturally covers \emph{both binary and continuous demand outcomes}, and propose a UCB pricing policy that operates \emph{without a dedicated initial i.i.d. pricing-design exploration phase: the UCB strategy automatically induces the exploration.}

\subsection{Related Literature and Main Idea}\label{sec:Related-Literature}

\textbf{Demand learning in competitive environments.} Over the past two decades, there has been substantial progress in demand learning under competition. Early work by \citet{kirman1975learning} analyzed symmetric two-seller competition with linear demand. Asymmetry and noise were incorporated by \citet{cooper2015learning}, who let each seller learn independently (effectively as a monopolist) and established equilibrium under known price sensitivities. In contrast, \citet{li2024lego} studied dynamic pricing with \emph{unknown} price sensitivities under a linear demand model and achieved the optimal $\widetilde{O}(\sqrt{T})$ regret rate. In their setup, seller $i$’s expected demand is  $\lambda_i(\mathbf{p})=\alpha_{i,0}-\beta_{i,0}p_i+\sum_{j \in [N] \setminus \{i\}}\gamma_{ij,0}p_j$, where $[N] = \{1, 2, \dots, N\}$ indexes the sellers and the true parameters $(\alpha_{i,0}, \beta_{i,0}, \gamma_{ij,0})$ are unknown with $\beta_{i,0}>0$. The condition $\beta_{i,0} > 0$ captures the standard economic assumption that the demand of seller $i$ decreases as their own price $p_i$ increases. However, nonlinear demand models are often more realistic and have been explored in several applications \citep{gallego2006price, wan2022nonlinear, wan2023conditions}. For instance, \citet{gallego2006price} considered the mean demand function $\lambda_i(\mathbf{p})=a_i\left(p_i\right)/(\sum_j a_j\left(p_j\right)+\kappa)$, for some \textit{known} increasing functions $a_i$ and real value $\kappa \in (0,1]$. They establish the existence of a NE, but they do not propose an online learning procedure.

In contrast to prior studies \citep{guda2026demand, li2024lego, kachani2007modeling, gallego2006price}, we analyze a more general \emph{monotone single-index} expected-demand model:
\begin{equation}\label{eq:SIM}
\lambda_i(\mathbf{p}) =\mu_i(-\beta_{i,0}p_i+\langle \boldsymbol{\gamma}_{i,0},\mathbf{p}_{-i}\rangle)=\mu_i(\langle \boldsymbol{\theta}_{i,0},\mathbf{p}\rangle),
\end{equation}
for each $i \in [N]$, where $\mathbf{p}_{-i} = (p_j)_{j \in [N] \setminus\{i\}}$, $\mathbf{p}=  (p_j)_{j \in [N]}$ and $\boldsymbol{\theta}_{i,0}$ is a vector of dimension $N$ with $i$-th entry equals to $-\beta_{i,0}$ (where $\beta_{i,0}>0$) and the rest of the entries are the values of $\boldsymbol{\gamma}_{i,0}$ ordered. The link function $\mu_i$ is assumed to be increasing, capturing the same economic intuition that demand decreases with the seller’s own price $p_i$. Here $\mu_i$ is known to seller $i$ (but unknown to competitors) and the parameter $\boldsymbol{\theta}_{i,0}$ is unknown (to everyone). This model generalizes the linear demand model by \citet{li2024lego}; specifically, $\mu_i(u)=u+\alpha_{i,0}$ recovers their formulation.


\textbf{Multi-Agent Reinforcement Learning (MARL).} MARL is a broad framework for sequential decision-making problems involving multiple strategic agents that at each instant select actions based on their local observations, receive feedback in the form of rewards (which depend on personal and competitors' actions), and then update their policy to maximize long-term returns. The ultimate objective is to reach a stable solution concept, such as an NE, where no agent can unilaterally improve its expected return. A key modeling assumption in MARL algorithms is full or partial observability of other agents' rewards or payoffs \citep{hu2003nash, zhang2021multi, yang2020overview}. For instance, the canonical formulation of \citet{hu2003nash} assumes that each agent can observe the immediate reward of all players and use this to update joint policy. Similarly, \citet{zhang2021multi} and \citet{yang2020overview} rely on either joint reward observability or access to sufficient statistics (e.g., payoff signals or utility vectors) that allow agents to infer the equilibrium structure. This assumption is critical: observing other agents' realized rewards enables an agent to estimate the full payoff landscape and to learn best-response dynamics with respect to the joint reward function. Our setting fundamentally differs from this MARL paradigm in both information structure and observability. In the sequential pricing game we study, each seller observes only their own reward (or demand value) and rivals’ posted prices -- never competitors’ demands or revenues. Because demand functions are private and unobservable, a seller cannot recover rivals’ payoff functions or utility gradients, so classical MARL techniques are not directly applicable.

\textbf{\underline{Main idea}: connection to adversarial contextual bandits and a timing asymmetry.}
Our key idea is to interpret $\mathbf{p}_{-i}^{(t)}$ as the \emph{context} faced by seller $i$ at round $t$, thereby connecting our setting to the literature on contextual bandits \citep{chu2011contextual, agrawal2013thompson, chakraborty2023thompson, kim2023double} and, more broadly, to nonlinear Cournot competition \citep{cournot1838recherches, saloner1987cournot}. Our model, however, differs from the standard contextual bandit framework in two important respects. First, in contextual bandits the context is typically exogenous, whereas here $\mathbf{p}_{-i}^{(t)}$ is endogenous: it evolves as a function of past prices and therefore of past demand shocks. Second, in classical contextual bandits the context is observed before the action is chosen, while in our market sellers move simultaneously, so seller $i$ observes $\mathbf{p}_{-i}^{(t)}$ only after choosing $p_i^{(t)}$. This timing asymmetry rules out a direct application of standard contextual-bandit analyses, particularly those based on the elliptical potential lemma \citep{carpentier2020elliptical}, which underlies many regret bounds. To overcome this difficulty, we adapt the analysis to the simultaneous-move setting and derive a suitable elliptical-type bound (see \cref{lemma:Elliptical-Lemma}) that still delivers regret control under delayed context observation.

\subsection{Contributions}\label{sec:contributions}
We summarize our main contributions.

\textbf{(1) Front-loaded exploration-free competitive learning.}
We propose a fully decentralized policy, 
$\ML$, that \emph{learns while pricing} and eliminates the need for an explicit, front-loaded exploration phase, in contrast to prior approaches \citep{li2024lego, bracale2025revenue}. This design directly addresses key practical limitations of explore–then–commit strategies, which are often artificial (since, in real markets, \emph{``experimenting with prices is usually not feasible or desirable”} \citep{cohen2015pricing}), may depend on unknown problem-specific constants (\cite{bracale2025revenue} Remark 5.5), and can be asymptotically suboptimal relative to fully sequential learning schemes \citep{garivier2016explore}. Instead, 
our $\ML$ algorithm continuously balances estimation and revenue via an optimistic (UCB-based) best-response update.

\textbf{(2) Generalize linear demand with private feedback.}
We study a competitive demand-learning model of the form
\(
\lambda_i(\mathbf p)=\mu_i(\langle \boldsymbol{\theta}_{i,0},\mathbf p\rangle),
\)
where $\mu_i$ is a known increasing link and $\boldsymbol{\theta}_{i,0}$ is unknown.
This strictly generalizes the linear models \citep{li2024lego,guda2026demand} and, in contrast to prior works, naturally covers both (i) \emph{binary} outcomes (purchase / no-purchase) \citep{goyal2023learning,golrezaei2019dynamic,javanmard2017perishability,javanmard2019dynamic,fan2024policy,bracale2025dynamic} and (ii) \emph{continuous} demand observations \citep{bertsimas2006dynamic, li2024lego, bracale2025revenue} within the same analysis. Crucially, our information structure matches real markets: each seller observes only \emph{their own} realized demand/revenue and competitors’ posted prices, not competitors’ outcomes, ruling out standard multi-agent RL assumptions (as noted above in \cref{sec:Related-Literature}) and requiring new proof techniques.

\textbf{(3) Optimal regret and new concentration tools.}
We establish that, using our $\ML$ algorithm, each seller achieves $\operatorname{Reg}_i(T)=\widetilde{O}(N^{2}\sqrt{T})$ against a dynamic benchmark, matching the optimal $\widetilde{O}(\sqrt{T})$ dependence on $T$ known from linear competitive pricing, where $\widetilde{O}$ suppresses logarithmic factors.. On the technical side, we derive an \emph{elliptical-potential-type bound} tailored to simultaneous-move competition. As a byproduct, we also obtain a quantitative bound on aggregate deviation from equilibrium, $\nicefrac{1}{T}\sum_{t=1}^T \mathbb{E}\|\mathbf p^{(t)}-\mathbf p^{\star}\|_2^2=\widetilde{O}(\nicefrac{N^2}{\sqrt{T}})$, showing that regret-optimal play keeps the market close to the Nash price vector $\mathbf p^{\star}$ on average over time.

\textbf{Notation.} We write $[N]=\{1,\dots,N\}$ and $[T]=\{1,\dots,T\}$. We denote by $\mathbf{p} = (p_j)_{j \in [N]}$ the vector of prices set by all sellers, and by $\mathbf{p}_{-i} = (p_j)_{j \in [N] \setminus \{i\}}$ the vector of prices set by seller $i$’s competitors. When we write $\mathbf{p} = (p_i, \mathbf{p}_{-i})$, we do so to emphasize that the price $p_i$ is chosen by seller $i$; this notation does not imply that $p_i$ is the first component of $\mathbf{p}$ -- it remains in its $i$-th position. We use $\|\cdot\|_p$ for the $\ell^p$, Euclidean norm, and $\langle \mathbf{u},\mathbf{v}\rangle=\mathbf{u}^\top\mathbf{v}$ for the inner product. We write $\widetilde{O}(\cdot)$ to suppress logarithmic factors and use $\lesssim$ to hide absolute constants. For a vector $\mathbf{p} \in \mathbb{R}^N$ and a symmetric positive definite matrix $V \in \mathbb{R}^{N\times N}$ we define the norm $\lVert \mathbf{p} \rVert_V := \sqrt{\mathbf{p}^{\top} V \mathbf{p}}$. A function $f(\mathbf{p}): \mathbb{R}^N \mapsto \mathbb{R}$ is $L$-Lipschitz (with respect to a norm $\|\cdot\|$), for some $L>0$ if $|f\left(\mathbf{p}^{\prime}\right)- f(\mathbf{p})| \leq L \| \mathbf{p}^{\prime}- \mathbf{p}\|$ for all $\mathbf{p}, \mathbf{p}^{\prime}$ in the domain of $f$. We use $\nabla f (\mathbf{p})$ and $\nabla^2 f (\mathbf{p})$ to denote the gradient vector and the Hessian matrix of $f$ at $\mathbf{p}$. We denote by $\mathrm{C}^m(\Omega)$ the set of $m$-times continuously differentiable functions $f:\Omega \rightarrow \mathbb{R}$. A vector field $\mathbf{F}: \Omega \mapsto \Omega$ is a contraction w.r.t. some norm $\|\cdot\|$ if it is $L$-Lipschitz with $L<1$.

\textbf{Organization of the paper.} The remainder of the paper is organized as follows. Section~\ref{sec:prob_form} introduces the problem formulation and model assumptions, and formally defines regret and Nash equilibrium. Section~\ref{sec:algo} presents our proposed methods, including the penalized maximum likelihood estimator and the UCB-based pricing policy. Section~\ref{sec:concentration} derives concentration bounds and establishes our main regret guarantees, outlining the key steps of the proofs. Section~\ref{sec:conclusion} concludes the paper by discussing limitations and future research directions. Additional technical details and complete proofs are deferred to the supplementary material.

\section{Problem Setting}\label{sec:prob_form}

We consider a pricing problem faced by $N$ sellers that sell a single product with unlimited inventory sequentially over a time horizon $T$, where $T$ is \emph{known} to the sellers. Let $\mathbf{p}_{-i}^{(t)}\triangleq  (p_j^{(t)})_{j \in [N] \setminus\{i\}}$ denote the $i$-th seller's competitors' prices at time $t$ and $\mathbf{p}^{(t)}\triangleq (p_j^{(t)})_{j \in [N]}$ denote the joint prices. Only for the purpose of the analysis we define $\mathcal{P}\triangleq \prod_{i \in [N]}[\underline{p_i}, \bar{p}_i]$, where $0\le \underline{p_i}< \bar{p}_i<+\infty$, which denotes the support of joint prices and $\mathcal{P}_{-i}= \prod_{j \in [N]\setminus \{i\} }\mathcal{P}_j$ the support of the competitors' prices. Every seller knows their own price domain $\mathcal{P}_i$ but not the competing ones $\{\mathcal{P}_j\}_{j \in [N]\setminus \{i\} }$. However, only $B_p = (\sum_{i \in [N]}\bar{p}_i^2)^{1/2}$ is known to all the sellers. The general dynamic is as follows: at time $t=0$, each seller $i$ selects any initial price $p_i^{(0)} \in \mathcal{P}_i$ and everyone observes their own realized demand $y_i^{(0)}$ according to \eqref{eq:expon_model}. The price vector $\mathbf{p}^{(0)}$ is made public. Let $\mathcal{H}^{(0)}_i= \{(\mathbf{p}^{(0)},y_i^{(0)})\}$ be the data collected by seller $i$. For each sales round $t=1, \ldots, T$:

\begin{enumerate}[topsep=2pt, itemsep=2pt, parsep=0pt, leftmargin=1.5em]
  \item Each seller $i$ sets $p_i^{(t)}$ based on their history $\mathcal{H}_i^{(t-1)}$.
  \item Each seller $i$ observes their realized demand $y_i^{(t)}$, sampled by nature according to the model in \eqref{eq:expon_model}. The vector of prices $\mathbf{p}^{(t)}$ is made public, and then each seller updates their history: $\mathcal{H}_i^{(t)}=\mathcal{H}_i^{(t-1)}\cup \{(\mathbf{p}^{(t)},y_i^{(t)})\}$.
  \item Each seller $i$ observes their revenue $r_i = p_i^{(t)}y_i^{(t)}$.
\end{enumerate}

We emphasize that \emph{sellers can observe the historical prices of competitors, but do not know the demand of competitors}. We assume that $y_i^{(t)}$, given the price vector $\mathbf{p}^{(t)}$, is sampled by nature according to a \textit{canonical exponential family} w.r.t. a reference measure 
$\nu_i$:
\begin{equation}\label{eq:expon_model}
\textstyle \frac{d\mathbb{P}_{\boldsymbol{\theta}_{i,0}}(y_i | \mathbf{p})}{d\nu_i(y_i)} = \exp\left\{
y_i \langle \boldsymbol{\theta}_{i,0}, \mathbf{p}\rangle - b_i(\langle \boldsymbol{\theta}_{i,0}, \mathbf{p}\rangle) + c_i(y_i)\right\},
\end{equation}
where $c_i(\cdot)$ is a real-valued function and $b_i(\cdot)$ is assumed to be twice continuously differentiable. A random variable $y_i$ with the above density satisfies $\mathbb{E}(y_i| \mathbf{p}) = b'_i(\langle \boldsymbol{\theta}_{i,0}, \mathbf{p}\rangle)$ and $\textnormal{var}(y_i | \mathbf{p}) = b''_i(\langle \boldsymbol{\theta}_{i,0}, \mathbf{p}\rangle)$, showing that $b_i(\cdot)$ is strictly convex. The inverse link function $\mu_i:=b'_i$ is consequently strictly increasing. Thus, we can write
\begin{align}\label{model}
\lambda_i (\mathbf{p}^{(t)})=\mathbb{E}[y_{i}^{(t)} | \mathbf{p}^{(t)}= \mu_i(-\beta_{i,0} p_i^{(t)} +\langle\boldsymbol{\gamma}_{i,0},\mathbf{p}_{-i}^{(t)}\rangle)= \mu_i(\langle \boldsymbol{\theta}_{i,0},\mathbf{p}^{(t)}\rangle),\nonumber
\end{align}
where $\mu_i$ is \emph{known to the seller $i$} (but not to competitors $j \in [N]\setminus \{i\}$) and $\boldsymbol{\theta}_{i,0}$ is an $N$ dimensional vector \emph{unknown to everyone} with $i$-th entry equals to $-\beta_{i,0}$ and the rest of the entries are the values of $\boldsymbol{\gamma}_{i,0}$ ordered. The parameters $\beta_{i,0}$ and $\boldsymbol{\gamma}_{i,0}$ are commonly referred to as the $i$-th seller’s own-price sensitivity and the competitors’ price sensitivities, respectively. The demands $\{y_j^{(t)}\}_{j \in [N]}$ within a period $t$ can be correlated across sellers conditional on $\mathbf{p}^{(t)}$. The model in \eqref{eq:expon_model} generates samples $y_i^{(t)}$ that reflect not only each seller's own price $p_i^{(t)}$ but also the strategies of the competitors $\mathbf{p}^{(t)}_{-i} = (p_j^{(t)})_{j \in [N], j \neq i}$. The exponential model in \eqref{eq:expon_model} allows us to generalize linear models and also to accommodate both binary and continuous outcomes: when binary, it might represent the indicator of whether seller $i$ sold the item or not at price $p_i^{(t)}$; when continuous, it can be seen as observed mean demand value at price $\mathbf{p}^{(t)}$ plus a mean-zero error. We present two relevant examples.

\begin{example}[Binary Response Model]
We consider a generalization of the monopolistic binary response model considered by \citet{fan2024policy, bracale2025dynamic} (and corresponding references). Each customer arrives at time $t$ and, given the price vector posted by sellers $\mathbf{p}^{(t)}$, samples a valuation for each seller $i$, $v_i^{(t)} = \langle \widetilde{\boldsymbol{\theta}}_{i,0},\mathbf{p}^{(t)}\rangle +\delta_i^{(t)}$, where $\widetilde{\boldsymbol{\theta}}_{i,0}$ is an unknown parameter and $\delta_i^{(t)}$ are i.i.d. across $t$ with c.d.f. $F_i$. Here $\delta_i^{(t)}$ are independent across $t$ but might be correlated across $i$ conditional on $\mathbf{p}^{(t)}$. Let $p_i^{(t)}$ be the price posted by seller $i$. A purchase from $i$ happens if $y_i^{(t)}=\mathbb{I}(p_i^{(t)}\leq v_i^{(t)})$ is equal to $1$. Let $\mu_i=1-F_i$ and note that $\mathbb{E}[y_i^{(t)}| \mathbf{p}^{(t)}]=\mu_i(p_i^{(t)}-\langle\widetilde{\boldsymbol{\theta}}_{i,0},\mathbf{p}^{(t)}\rangle) = \mu_i(\langle \boldsymbol{\theta}_{i,0},\mathbf{p}^{(t)}\rangle)$, where $\boldsymbol{\theta}_{i,0}$ coincides with $\widetilde{\boldsymbol{\theta}}_{i,0}$ except for the $i$-th coordinate which instead is equal to $1-[\widetilde{\boldsymbol{\theta}}_{i,0}]_i$, where $[\widetilde{\boldsymbol{\theta}}_{i,0}]_i$ is the $i$-th coordinate of $\widetilde{\boldsymbol{\theta}}_{i,0}$. This model coincides with the canonical exponential family in \eqref{eq:expon_model} with $\mu_i = b_i'$.
\end{example}

\begin{example}[Linear Regression Model]
Here we describe the model by \citet{li2024lego}. The observed demand at time $t$ for seller $i$ is assumed to have the form $y_i^{(t)} = \alpha_{i,0}-\beta_{i,0}p_i^{(t)}+\langle \boldsymbol{\gamma}_{i,0}, \mathbf{p}_{-i}^{(t)}\rangle+\eta_i^{(t)}$, $\boldsymbol{\eta}^{(t)} \mid \mathbf{p}^{(t)} \sim \mathcal{N}\left(\boldsymbol{0}, \Sigma\left(\mathbf{p}^{(t)}\right)\right)$, with $\Sigma\left(\mathbf{p}^{(t)}\right)$ positive semidefinite, where $\boldsymbol{\eta}^{(t)}  = (\eta_i^{(t)} )_{i \in [N]}$. This model matches the canonical exponential family in \eqref{eq:expon_model} with $\mu_i (x) = b_i'(x) = \alpha_{i,0}+x$.
\end{example}

We now present the assumptions on $\mu_i$ and the parameters $\boldsymbol{\theta}_{i,0}$. We assume that the average demand $\lambda_i$ is non-negative and $\partial_{p_i}\lambda_i <0$ among all values of $\{\boldsymbol{\theta}_{i,0}\}_{i \in [N]}$; a similar assumption is found in \citet{birge2024interfere, li2024lego, bracale2025revenue}. The above conditions hold if $\mu_i,\mu_i'>0$ and $\beta_{i,0}>0$, which are explicitly stated in \cref{assumption_param} and subsequent \cref{assumption_mu}.

\begin{assumption}[Bounded and feasible parameter space]
\label{assumption_param}
The unknown vector $\boldsymbol{\theta}_{i,0} = (-\beta_{i,0},\boldsymbol{\gamma}_{i,0})\in \Theta_i \subset \mathbb{R}^{N}$ satisfies $0<\underline{\beta}_i \leq \beta_{i,0} \leq \bar{\beta}_i$, $\boldsymbol{\gamma}_{i,0} \in K_i\subset \mathbb{R}^{N-1}$ for some $K_i$ compact convex, so that $\Theta_i \triangleq [-\bar{\beta}_i,-\underline{\beta}_i] \times K_i$ is compact convex in $\mathbb{R}^{N}$. $\Theta_i$ is known to seller $i$.
\end{assumption} 
We denote by $B_{\theta_i}>0$ the $\ell^2$ upper bound on $\Theta_i$, that is $ \lVert \boldsymbol{\theta}_{i} \rVert_2 \leq B_{\theta_i}$ for all $\boldsymbol{\theta}_{i} \in \Theta_i$. As far as $\mu_i=b_i'$, we already know that it is non-decreasing because $b_i$ is convex. However, to derive concentration bounds, we require $\mu_i'$ to be bounded away from zero. Before making this explicit, we need to define 
\begin{equation*}
\textstyle \mathcal{U}_i \triangleq [u^{\min}_i,u^{\max}_i],\quad u_i^{\min} \triangleq \inf_{(\boldsymbol{\theta}_i,\mathbf{p}) \in \Theta_i \times \mathcal{P}}\boldsymbol{\theta}_i^{\top}\mathbf{p},\quad u_i^{\max} \triangleq \sup_{(\boldsymbol{\theta}_i,\mathbf{p}) \in \Theta_i \times \mathcal{P}} \boldsymbol{\theta}_i^{\top}\mathbf{p},
\end{equation*}
which is known to seller $i$ by \cref{assumption_param}.

\begin{assumption}[Smoothness of $\mu_i$]
\label{assumption_mu}
We assume that $\mu_i \in \mathrm{C}^2(\mathcal{U}_i)$ with 
\begin{equation*}
0\leq \mu_i(u)\leq B_{i}, \quad\mu_i'(u) \geq c_{\mu_i}, \quad \forall u \in \mathcal{U}_i,
\end{equation*}
and $\mu_i'(u) \leq L_{\mu_i}$, $\forall u \in \mathbb{R}$, for some $B_i,c_{\mu_i},L_{\mu_i}>0$. The function $\mu_i$ is known to seller $i$ but unknown to competitors $j \in [N]\setminus \{i\}$.
\end{assumption}
Requiring $\mu_i'$ to be bounded away from zero is standard: it prevents degeneracy in the curvature of the log-likelihood (equivalently, keeps the Fisher information well-conditioned) and underpins the concentration inequalities used in our analysis; similar assumptions can be found in  \citet{russac2019weighted, russac2020algorithms,
balabdaoui2019least, groeneboom2018current}. Let $B_i''>0$ be such that $|\mu_i''(u)|\le B_i''$ for all $u\in\mathcal U_i$, which exists because $\mu_i\in\mathrm C^2(\mathcal U_i)$.

We define the noise term as $
\eta_i^{(t)} \triangleq  y_i^{(t)} - \mu_i(\langle\mathbf{p}^{(t)}, \boldsymbol{\theta}_{i,0}\rangle)$, which satisfies the following \cref{lemma:subgaussian_noise}, used to derive concentration bounds in our analysis.

\begin{lemma}\label{lemma:subgaussian_noise}
Under \cref{assumption_mu}, for every $i \in [N]$, $\eta_i^{(t)}$ is $L_{\mu_i}$-subgaussian conditionally on $ \mathcal{H}_i^{(t-1)}$.
\end{lemma}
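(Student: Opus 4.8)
The plan is to exploit the canonical exponential-family structure of \eqref{eq:expon_model} directly, since the moment generating function (MGF) of an exponential family admits a closed form. Write $u^{(t)} \triangleq \langle \boldsymbol{\theta}_{i,0}, \mathbf{p}^{(t)}\rangle$ and condition on the $\sigma$-field generated by $\mathcal{H}_i^{(t-1)}$ together with the \emph{realized} joint price $\mathbf{p}^{(t)}$, so that $u^{(t)}$ is fixed. Under this conditioning $y_i^{(t)}$ has density \eqref{eq:expon_model} with natural parameter $u^{(t)}$, and a standard computation (using that the density at parameter $u^{(t)}+s$ integrates to one) gives $\mathbb{E}[e^{s y_i^{(t)}} \mid \mathbf{p}^{(t)}] = \exp\{b_i(u^{(t)}+s) - b_i(u^{(t)})\}$ for every $s \in \mathbb{R}$; the MGF is finite for all $s$ because $b_i$, with $b_i'=\mu_i$, is defined and $\mathrm{C}^2$ on all of $\mathbb{R}$ by \cref{assumption_mu}. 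Subtracting the conditional mean $\mu_i(u^{(t)}) = b_i'(u^{(t)})$ yields $\mathbb{E}[e^{s\eta_i^{(t)}} \mid \mathbf{p}^{(t)}] = \exp\{b_i(u^{(t)}+s) - b_i(u^{(t)}) - s\, b_i'(u^{(t)})\}$.

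The second step is a one-line Taylor bound. By the Lagrange form of Taylor's theorem there is $\xi$ between $u^{(t)}$ and $u^{(t)}+s$ with $b_i(u^{(t)}+s) - b_i(u^{(t)}) - s\, b_i'(u^{(t)}) = \tfrac{s^2}{2} b_i''(\xi)$. Since $b_i'' = \mu_i'$ and \cref{assumption_mu} guarantees $\mu_i'(\xi) \le L_{\mu_i}$ for \emph{all} $\xi \in \mathbb{R}$, this is at most $\tfrac{1}{2} L_{\mu_i} s^2$, so $\mathbb{E}[e^{s \eta_i^{(t)}} \mid \mathbf{p}^{(t)}] \le \exp\{\tfrac{1}{2} L_{\mu_i} s^2\}$, i.e. $\eta_i^{(t)}$ is $L_{\mu_i}$-subgaussian (with variance proxy $L_{\mu_i}$) given $\mathbf{p}^{(t)}$.

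The delicate point — and the reason \cref{assumption_mu} imposes the derivative bound on all of $\mathbb{R}$ rather than only on $\mathcal{U}_i$ — is the conditioning. The lemma asserts subgaussianity given $\mathcal{H}_i^{(t-1)}$, but $\mathbf{p}^{(t)}$ is \emph{not} $\mathcal{H}_i^{(t-1)}$-measurable: by the timing of the game the competitors' prices are revealed only after round $t$, so conditionally on $\mathcal{H}_i^{(t-1)}$ the natural parameter $u^{(t)}$ is still random. This is where the uniform bound is essential: because the per-price MGF bound $\exp\{\tfrac12 L_{\mu_i} s^2\}$ holds for every admissible value of $\mathbf{p}^{(t)}$ (hence of $u^{(t)}$), it survives averaging over $\mathbf{p}^{(t)}$. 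I would close the argument with the tower property, $\mathbb{E}[e^{s\eta_i^{(t)}} \mid \mathcal{H}_i^{(t-1)}] = \mathbb{E}[\,\mathbb{E}[e^{s\eta_i^{(t)}} \mid \mathcal{H}_i^{(t-1)}, \mathbf{p}^{(t)}] \mid \mathcal{H}_i^{(t-1)}] \le \exp\{\tfrac12 L_{\mu_i} s^2\}$, which is exactly the claimed conditional subgaussianity. The only routine check I would not belabor is the interchange of expectation and integral when writing the MGF in closed form, which is justified by the finiteness of $b_i$ on $\mathbb{R}$.
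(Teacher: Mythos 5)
Your proposal is correct and follows essentially the same route as the paper's proof: the closed-form exponential-family MGF, the Lagrange--Taylor bound $b_i(u+s)-b_i(u)-sb_i'(u)=\tfrac{s^2}{2}\mu_i'(\xi)\le\tfrac{L_{\mu_i}s^2}{2}$ using the global bound $\mu_i'\le L_{\mu_i}$ on $\mathbb{R}$, and the tower property over $\sigma(\mathcal{H}_i^{(t-1)},\mathbf{p}^{(t)})$ to handle the fact that $\mathbf{p}^{(t)}$ is not $\mathcal{H}_i^{(t-1)}$-measurable. Your explicit remark on why the derivative bound must hold on all of $\mathbb{R}$ (not just $\mathcal{U}_i$) and why the uniform bound survives averaging over the random natural parameter is precisely the mechanism the paper's proof uses.
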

\begin{remark}[Impossibility to learn competitors’ models]
Before proceeding to the next section, we highlight a key distinction between our framework and the MARL models discussed in \cref{sec:Related-Literature}. In most MARL settings, the realized rewards of all agents -- in our setting, the realized demands of competing sellers -- are observed by every participant. By contrast, our model assumes that these realized demands remain private. This privacy has an important implication: no seller can infer the policy of their competitors. To see this, recall that the demand model for seller $j$ is $\mathbb{E}[y_j^{(t)}| \mathbf{p}^{(t)}] = \mu_j(\langle \boldsymbol{\theta}_{j,0} , \mathbf{p}^{(t)} \rangle)$ where $\boldsymbol{\theta}_{j,0}$ and $\mu_j$ are the parameter vector and link function of seller $j$, respectively, both unknown to competitors. At each time $t$, a competitor seller $i \neq j$ only observes $\mathcal{H}_i^{(t)} = \cup_{s \leq t} \{ (y_i^{(s)}, \mathbf{p}^{(s)}) \}$ but  does not have access to the demand realizations $\{y_j^{(s)}\}_{s \leq t}$ of seller $j$. Thus, seller $i$ has no information from which to infer their competitor parameters $(\boldsymbol{\theta}_{j,0}, \mu_j)$, and hence predict their pricing policy.
\end{remark}

\subsection{Regret and Nash equilibrium}
Each seller competes with a dynamic optimal sequence of prices in hindsight while assuming that the other sellers would not have responded differently if this sequence of prices had been offered. Under such a dynamic benchmark, the objective of each seller $i \in [N]$ is to minimize the following regret metric in hindsight:
\begin{align}\label{eq:R}
\textstyle  \mathrm{Reg}_i(T)\triangleq \mathbb{E}\sum_{t=1}^T \operatorname{rev}_i(\Gamma_i(\mathbf{p}_{-i}^{(t)})|\mathbf{p}_{-i}^{(t)})-\operatorname{rev}_i(p_i^{(t)}|\mathbf{p}_{-i}^{(t)}), \quad \operatorname{rev}_i(p_i | \mathbf{p}_{-i})&\triangleq   p_i\mu_i( \langle \mathbf{p},
\boldsymbol{\theta}_{i,0}\rangle),
\end{align}
where $\Gamma_i:\mathcal{P}_{-i}\rightarrow \mathcal{P}_i$, $\Gamma_i(\mathbf{p}_{-i})\triangleq \arg \max _{p_i\in \mathcal{P}_i} \, \operatorname{rev}_i(p_i | \mathbf{p}_{-i})$, $\forall \mathbf{p}_{-i} \in \mathcal{P}_{-i}$, is called the $i$-th seller's \emph{best-response (BR) map}. We call \emph{BR operator}  the map $\boldsymbol{\Gamma}: \mathcal{P} \rightarrow \mathcal{P}$, $\boldsymbol{\Gamma}(\mathbf{p})= (\Gamma_1(\mathbf{p}_{-1}),\dots, \Gamma_N(\mathbf{p}_{-N}))$. A NE is a price vector $\mathbf{p}^{\star}$, solution to the following fixed point equation:
\begin{equation}\label{eq:NASH}
\mathbf{p}^{\star}= \boldsymbol{\Gamma}(\mathbf{p}^{\star})= (\Gamma_1(\mathbf{p}^{\star}_{-1}),\dots, \Gamma_N(\mathbf{p}^{\star}_{-N})).
\end{equation}

\begin{assumption}\label{strong-concavity-rev}
For every $i \in [N]$,  $\operatorname{rev}_i(\cdot | \mathbf{p}_{-i})$ is strongly concave in $\mathcal{P}_i$, uniformly in $\mathbf{p}_{-i}\in \mathcal{P}_{-i}$.\footnote{There exists $\xi_i>0$ independent of $\mathbf{p}_{-i}$ such that for all $x,y\in\mathcal{P}_i$ and all
$\mathbf{p}_{-i}$, $\operatorname{rev}_i(y | \mathbf{p}_{-i})
-
\operatorname{rev}_i(x | \mathbf{p}_{-i})\leq
\big\langle \nabla_{x}\operatorname{rev}_i(x | \mathbf{p}_{-i}), y-x\big\rangle
-\frac{\xi_i}{2} \|y-x\|_2^2$.} Moreover, $\mathbf{\Gamma}$ is a contraction, that is $\|\boldsymbol{\Gamma}(\mathbf{p})-\boldsymbol{\Gamma}(\mathbf{p}')\|_2 \leq L_{\boldsymbol{\Gamma}}\|\mathbf{p}-\mathbf{p}'\|_{2}$ for all $\mathbf{p}, \mathbf{p}'\in \mathcal{P}$ and some $L_{\boldsymbol{\Gamma}}\in [0,1)$.
\end{assumption}

\begin{remark}[Strong concavity and contraction of the best-response operator]
This remark provides intuition for the role of \cref{strong-concavity-rev}. The strong concavity of $\operatorname{rev}(\cdot \mid \mathbf{p}_{-i})$ is introduced to control the individual regret. We bound the regret by relating it to the deviation of the joint price vector from a NE. In particular, strong concavity allows us to establish $\operatorname{Reg}_i(T) \lesssim \sum_{t=1}^T \mathbb{E}\|\mathbf{p}^{(t)} - \mathbf{p}^\star\|_2^2$, for some NE $\mathbf{p}^\star$ (see \cref{reg_NE_control} for details). To control this deviation, we use the fixed-point identity $\mathbf{p}^\star = \boldsymbol{\Gamma}(\mathbf{p}^\star)$ and repeatedly apply the triangle inequality. This yields $\|\mathbf{p}^{(t)} - \mathbf{p}^\star\|_2
\lesssim
\sum_{j=0}^{t-1} L_{\boldsymbol{\Gamma}}^j
\;+\;
L_{\boldsymbol{\Gamma}}^{t-1}\|\mathbf{p}^{(0)} - \mathbf{p}^\star\|_2$ (see \cref{reg_NE_control_2}) , where $L_{\boldsymbol{\Gamma}}$ is the Lipschitz constant of the best-response operator $\boldsymbol{\Gamma}$. For the cumulative error $\sum_{t=1}^T \mathbb{E}\|\mathbf{p}^{(t)} - \mathbf{p}^\star\|_2^2$ to remain controlled, it is necessary that $0 \le L_{\boldsymbol{\Gamma}} < 1$, i.e., that $\boldsymbol{\Gamma}$ is a contraction. This condition prevents errors from amplifying across iterations and ensures stability of the learning dynamics.
\end{remark}

The {\bf \emph{strong concavity}} condition in \cref{strong-concavity-rev} is standard in the literature  \citep{bracale2025revenue, scutari2014real, li2024lego, tsekrekos2024variational}.  It encompasses linear demand models \citep{guda2026demand, li2024lego}, concave demand specifications, and more generally $s_i$-concave demand functions with $s_i \geq 1$ (see \citet{bracale2025revenue}, Appendix I). Importantly, it guarantees that the best-response map $\Gamma_i(\cdot)$ is well defined, that is, it admits a unique solution for every $\mathbf{p}_{-i}$. This solution can be obtained by solving the first-order condition $\partial_{p_i}\operatorname{rev}_i(p_i \mid \mathbf{p}_{-i}) = 0$ and projecting the solution onto $\mathcal{P}_i$. 

The {\bf \emph{contraction property}} in \cref{strong-concavity-rev} is typically ensured by combining a shape constraint on $\mu_i$ with a restriction on the model parameters. For instance, \citet{bracale2025revenue} assumes that $\mu_i$ is $s_i$-concave for some $s_i > -1$, together with the condition $\sup_{i \in [N]} \|\boldsymbol{\gamma}_{i,0}\|_1 / \beta_{i,0} < 1$. Similar parameter restrictions appear in \citet{li2024lego} (Assumption 2) and \citet{kachani2007modeling}. 

Finally, strong concavity ensures the existence of a Nash equilibrium, while the contraction of the best-response operator guarantees its uniqueness. Existence follows directly from Theorem 3 in \citet{scutari2014real}, and uniqueness is an immediate consequence of the Banach contraction theorem.

\section{Algorithm} \label{sec:algo}

We build on the UCB framework of \citet{russac2020algorithms}, adapting it to our competitive setting with endogenous and partially observed interactions. Our extension constructs confidence bounds for the revenue functions via a penalized likelihood approach, while eliminating the need for a coordinated initial exploration phase.

\textbf{Penalized MLE.} By the exponential model in \eqref{eq:expon_model}, the (penalized) maximum likelihood estimator $\widehat{\boldsymbol{\theta}}_i^{(t)}$ based on the observed
$y_i^{(1)},\dots,y_i^{(t-1)}$ and the selected prices $\mathbf{p}^{(1)},\dots,\mathbf{p}^{(t-1)}$ is
defined as the maximizer of $\sum_{s=1}^{t-1} \ln( \mathbb{P}_{\boldsymbol{\theta}_i}(y_i^{(s)} | \mathbf{p}^{(s)})) - \frac{\lambda}{2} \lVert \boldsymbol{\theta}_i\rVert_2^2$, which equals
\begin{align}\label{eq_log_likelihood_SW}
\textstyle \sum_{s=1}^{t-1} y_i^{(s)} \langle \boldsymbol{\theta}_{i},  \mathbf{p}^{(s)}\rangle - b_i(\langle \boldsymbol{\theta}_{i} , \mathbf{p}^{(s)}\rangle) + c_i(y_i^{(s)})- \frac{\lambda}{2} \lVert \boldsymbol{\theta}_i\rVert_2^2
\end{align}
which is strongly concave because, by convexity of $b_i$, $\sum_{s=1}^{t-1} \ln( \mathbb{P}_{\boldsymbol{\theta}_i}(y_i^{(s)} | \mathbf{p}^{(s)}))$ is concave in $\boldsymbol{\theta}_i$. Define $\widehat{\boldsymbol{\theta}}_i^{(t)\nML}$ as the unique solution of
\begin{equation}
\label{eq_MLE_SW}
\textstyle \sum_{s=1}^{t-1} \{y_i^{(s)} - \mu_i(\langle \mathbf{p}^{(s)}, \boldsymbol{\theta}_i\rangle )\} \mathbf{p}^{(s)} - \lambda \boldsymbol{\theta}_i = \boldsymbol{0} ,
\end{equation}
where the LHS is the derivative of the penalized log-likelihood in \eqref{eq_log_likelihood_SW}. Following the notation in \citet{russac2020algorithms}, we introduce
\begin{equation}
\label{eq_Design_matrix_SW}
\textstyle V_{i}^{(t-1)} \triangleq \sum_{s=1}^{t-1} \mathbf{p}^{(s)} \mathbf{p}^{(s)\top} + \frac{\lambda}{c_{\mu_i}} I_N ,\quad \text{and} \quad g_i^{(t-1)}(\boldsymbol{\theta}_i) \triangleq \sum_{s= 1}^{t-1} \mu_i(\langle \mathbf{p}^{(s)}, \boldsymbol{\theta}_i \rangle) \mathbf{p}^{(s)} + \lambda \boldsymbol{\theta}_i.
\end{equation}
For fixed $i$ and $t$, $g_i^{(t-1)}$ denotes a surrogate function introduced to obtain an upper bound on the matrix $V_i^{(t-1)}$ -- in the sense of partial ordering on the non-negative matrices. This bound plays a key role in controlling the upper confidence bound that will be formally defined in \cref{def:UCB_conditioned}. At every time $t$, the $i$-th seller estimator is $\widetilde{\boldsymbol{\theta}}_i^{(t)\nML}$, defined by
\begin{equation}
\label{eq:theta_tilde_SW}
\textstyle \widetilde{\boldsymbol{\theta}}_i^{(t)\nML}  =  \operatorname{argmin}_{\lVert \boldsymbol{\theta}_i \rVert_2 \leq B_{\theta_i}}  \lVert g_i^{(t-1)}(\widehat{\boldsymbol{\theta}}_i^{(t)\nML}) - g_i^{(t-1)}(\boldsymbol{\theta}_i)
 \rVert_{V_{i}^{(t-1),-1}},
\end{equation} 
where $V_{i}^{(t-1),-1}$ is the inverse of $V_{i}^{(t-1)}$. We need to consider $\widetilde{\boldsymbol{\theta}}_i^{(t)\nML}$ because $\widehat{\boldsymbol{\theta}}_i^{(t)\nML}$ is not guaranteed to satisfy $\lVert \widehat{\boldsymbol{\theta}}_i^{(t)\nML} \rVert_2 \leq B_{\theta_i}$. Here, $\widetilde{\boldsymbol{\theta}}_i^{(t)\nML}$ should be understood as a ``projection''
onto the admissible parameter set.

\textbf{Pricing Policy via UCB.} In this section, we describe how each seller $i$ determines their prices $p_i^{(t)}$ at a given time $t$. Fix $t \in [T]$ and let $\widetilde{\boldsymbol{\theta}}_i^{(t)\nML}=(-\widetilde{\beta}_i^{(t)\nML},\widetilde{\boldsymbol{\gamma}}_i^{(t)\nML})$ be the parameter estimate defined in \eqref{eq:theta_tilde_SW}. For any $(p_i,\mathbf{p}_{-i})\in \mathcal{P}$, we define the estimated revenue as the plug-in estimator:
\begin{align*}
\widehat{ \operatorname{rev}}_i^{(t)}(p_i | \mathbf{p}_{-i}) =  p_i\mu_{i}( -\widetilde{\beta}_i^{(t)\nML}p_i+\langle\widetilde{\boldsymbol{\gamma}}_i^{(t)\nML}, \mathbf{p}_{-i}\rangle)=p_i\mu_i(\langle \mathbf{p},
\widetilde{\boldsymbol{\theta}}_i^{(t)\nML} \rangle).
\end{align*}
Following standard UCB-based approaches \citep{russac2020algorithms, russac2019weighted}, we define the upper confidence bound (UCB) of the estimated revenue as 
\begin{align}\label{def:UCB_conditioned}
\mathrm{UCB}_i^{(t)}(p_i| \mathbf{p}_{-i}^{(t-1)})= \widehat{ \operatorname{rev}}_i^{(t)}(p_i | \mathbf{p}_{-i}^{(t-1)}) +\sigma_i^{(t-1)}(p_i,\delta),
\end{align}
where for all $s = 0,1,\dots, T-1$ we define
\begin{align}\label{def:UCB}
&\textstyle \sigma_i^{(s)}(p_i,\delta) \triangleq \rho_i^{(s)}(\delta) p_i \lVert (p_i,\mathbf{p}_{-i}^{(s)}) \rVert_{V_i^{(s),-1}},\quad \rho_i^{(s)}(\delta) \triangleq \tfrac{2 L_{\mu_i}}{c_{\mu_i}}  \left( \mathrm{c}^{(s)}_i(\delta)
+ B_{\theta_i}\sqrt{c_{\mu_i} \lambda}  \right),\nonumber\\
&\mathrm{c}^{(s)}_i(\delta) \triangleq  L_{\mu_i} \sqrt{2 \log\left(\nicefrac{T}{\delta}\right) + N \log\left( 1 + \nicefrac{B_p^2 s}{N \lambda} \right) } .
\end{align}
We finally define $p_i^{(t)}= \widehat{\Gamma}_i^{(t)}(\mathbf{p}^{(t-1)}_{-i}) \in \operatorname{argmax}_{p_i \in \mathcal{P}_i} \, \mathrm{UCB}_i^{(t)}(p_i| \mathbf{p}_{-i}^{(t-1)})$, which represents an optimistic approximation of the best-response map $\Gamma_i$ at $\mathbf{p}^{(t-1)}_{-i}$. Using this notation, we are now ready to present our $\ML$ algorithm (Penalized Maximum Likelihood Generalized Linear Upper Confidence Bound), which is fully described in \cref{alg:S-GLM}.


\begin{algorithm}[h]
\caption{$\ML$}
   \label{alg:S-GLM}
\begin{algorithmic}
\STATE {\bfseries Input:} Probability $\delta$, regularization $\lambda$, 
upper bounds $B_p$, $B_{\theta_i}$, $B_i$, $c_{\mu_i}$, and $L_{\mu_i}$.
\STATE {\bfseries Initialize:} $\widehat{\boldsymbol{\theta}}_i^{(0)\nML} \in \Theta_i$, each seller starts with any $p_i^{(0)} \in \mathcal{P}_i$ and observes the corresponding $y_i^{(0)}$. At the end of time $t=0$ each seller $i$ observes $\mathbf{p}^{(0)}$ and sets $V_i^{(0)} = \mathbf{p}^{(0)}\mathbf{p}^{(0)\top} + \nicefrac{\lambda}{c_{\mu_i}} I_N$.
\FOR{$t=1$ {\bfseries to} $T$}
\STATE $\bullet$ Each seller $i$ computes $\widehat{\boldsymbol{\theta}}_i^{(t)\nML}$ according to \eqref{eq_MLE_SW}. If $\lVert \widehat{\boldsymbol{\theta}}_i^{(t)\nML} \rVert_2 \leq B_{\theta_i}$ let $\widetilde{\boldsymbol{\theta}}_i^{(t)\nML} =  \widehat{\boldsymbol{\theta}}_i^{(t)\nML}$. Else compute
   $\widetilde{\boldsymbol{\theta}}_i^{(t)\nML}$ defined in \eqref{eq:theta_tilde_SW}.
   \STATE $\bullet$ Each seller $i$ set $p_i^{(t)} \leftarrow \widehat{\Gamma}_i^{(t)}(\mathbf{p}^{(t-1)}_{-i})\in \operatorname{argmax}_{p_i \in \mathcal{P}_i} \, \mathrm{UCB}_i^{(t)}(p_i| \mathbf{p}_{-i}^{(t-1)})$.
  \STATE $\bullet$ Each seller $i$ observes $y_i^{(t)}$. 
  \STATE $\bullet$ Each seller $i$ observes $\mathbf{p}^{(t)}$.
  \STATE $\bullet$ Each seller $i$ updates $V_{i}^{(t)} \leftarrow V_{i}^{(t-1)}+ \mathbf{p}^{(t)} \mathbf{p}^{(t)\top}$.
   \ENDFOR
\end{algorithmic}
\end{algorithm}

\section{Regret Analysis} \label{sec:concentration}

In this section, we present our main result, \cref{theorem_regret_SW}, which shows the $\widetilde{O}(\sqrt{T})$ bound on the individual regret.


\begin{theorem}[Regret of $\ML$]
\label{theorem_regret_SW}
Suppose assumptions \eqref{assumption_param}, \eqref{assumption_mu},  and \eqref{strong-concavity-rev}. For every $i \in [N]$ we have that, if  $\delta= \frac{1}{T^\gamma}$ for some fixed $\gamma >1$, then, as long as $c_{\mu_j} B_p^2 \leq \lambda \leq O(N\log(T))$, $\operatorname{Reg}_i(T)\leq C + O(N^{2}\sqrt{T}\log(T))$, for some $C>0$ independent of $T$. An upper bound expression for $\operatorname{Reg}_i(T)$ depending on $\lambda,N$ and $T$ can be found in \eqref{eq:regret_dependence_on_lambda}. Additionally we have
\begin{equation}\label{eq:conv_sum_NE}
\textstyle  \sum_{t=1}^T\mathbb{E}[\|\mathbf{p}^{(t)}-\mathbf{p}^{\star}\|_2^2] = O(N^{2}\sqrt{T}\log(T)),
\end{equation}
where $\mathbf{p}^{(t)}=\widehat{\boldsymbol{\Gamma}}^{(t)}(\mathbf{p}^{(t-1)})=(\widehat{\Gamma}_i^{(t)}(\mathbf{p}_{-i}^{(t-1)}))_{i\in[N]}$, i.e. $p_i^{(t)} = \widehat{\Gamma}_i^{(t)}(\mathbf{p}_{-i}^{(t-1)})$, $\forall i \in [N]$.

\end{theorem}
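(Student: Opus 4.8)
The plan is to reduce both displayed claims to a single bound on the cumulative squared deviation $\sum_{t}\mathbb{E}\|\mathbf{p}^{(t)}-\mathbf{p}^*\|_2^2$, and then to prove that bound via an optimism argument glued together by the contraction of $\boldsymbol{\Gamma}$. First I would establish the reduction $\operatorname{Reg}_i(T)\lesssim\sum_{t}\mathbb{E}\|\mathbf{p}^{(t)}-\mathbf{p}^*\|_2^2$ announced before the theorem (formally \cref{reg_NE_control}): since $\Gamma_i(\mathbf{p}_{-i}^{(t)})$ maximizes the $\mathrm{C}^2$, boundedly curved (by \cref{assumption_mu}) map $p_i\mapsto\operatorname{rev}_i(p_i\mid\mathbf{p}_{-i}^{(t)})$, smoothness together with optimality of the maximizer controls each regret summand by $\lesssim(p_i^{(t)}-\Gamma_i(\mathbf{p}_{-i}^{(t)}))^2$; writing $p_i^*=\Gamma_i(\mathbf{p}_{-i}^*)$ and using the Lipschitzness of $\Gamma_i$ (a byproduct of \cref{lemma:contraction}) to bound $|\Gamma_i(\mathbf{p}_{-i}^*)-\Gamma_i(\mathbf{p}_{-i}^{(t)})|\lesssim\|\mathbf{p}^{(t)}-\mathbf{p}^*\|$ converts this into $\lesssim\|\mathbf{p}^{(t)}-\mathbf{p}^*\|_2^2$. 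Thus the whole theorem follows once \eqref{eq:conv_sum_NE} is proved. (The quadratic, rather than merely Lipschitz, control is essential: a linear bound followed by Cauchy--Schwarz would only give $T^{3/4}$.)

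To prove \eqref{eq:conv_sum_NE} I would first condition on the high-probability event $\mathcal{E}$ on which the plug-in revenue is sandwiched by its width, $|\widehat{\operatorname{rev}}_i^{(t)}(p_i\mid\mathbf{p}_{-i}^{(t-1)})-\operatorname{rev}_i(p_i\mid\mathbf{p}_{-i}^{(t-1)})|\le\sigma_i^{(t-1)}(p_i,\delta)$ for all $i,t,p_i$ -- the self-normalized GLM concentration built on the $L_{\mu_i}$-subgaussian noise of \cref{lemma:subgaussian_noise} and the design $V_i^{(t-1)}$, valid with probability $\ge 1-\delta$ (the content of the concentration results preceding the theorem). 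On $\mathcal{E}$, the usual optimism chain -- $p_i^{(t)}$ maximizes the UCB while the UCB at $\Gamma_i(\mathbf{p}_{-i}^{(t-1)})$ dominates its true revenue -- gives $\operatorname{rev}_i(\Gamma_i(\mathbf{p}_{-i}^{(t-1)})\mid\mathbf{p}_{-i}^{(t-1)})-\operatorname{rev}_i(p_i^{(t)}\mid\mathbf{p}_{-i}^{(t-1)})\le 2\sigma_i^{(t-1)}(p_i^{(t)},\delta)$, while \cref{strong-concavity-rev} lower-bounds the same gap by $\tfrac{\xi_i}{2}(\zeta_i^{(t)})^2$ with $\zeta_i^{(t)}:=p_i^{(t)}-\Gamma_i(\mathbf{p}_{-i}^{(t-1)})$. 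Hence the per-round estimation error obeys $(\zeta_i^{(t)})^2\lesssim\sigma_i^{(t-1)}(p_i^{(t)},\delta)$.

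Next I would convert these per-round errors into the NE-deviation sum using \cref{lemma:contraction}. Decomposing $\mathbf{p}^{(t)}-\mathbf{p}^*=\boldsymbol{\zeta}^{(t)}+(\boldsymbol{\Gamma}(\mathbf{p}^{(t-1)})-\boldsymbol{\Gamma}(\mathbf{p}^*))$ and invoking the contraction $\|\boldsymbol{\Gamma}(\mathbf{p}^{(t-1)})-\boldsymbol{\Gamma}(\mathbf{p}^*)\|\le L_{\boldsymbol{\Gamma}}\|\mathbf{p}^{(t-1)}-\mathbf{p}^*\|$ with $L_{\boldsymbol{\Gamma}}<1$ yields the recursion $\|\mathbf{p}^{(t)}-\mathbf{p}^*\|\le\|\boldsymbol{\zeta}^{(t)}\|+L_{\boldsymbol{\Gamma}}\|\mathbf{p}^{(t-1)}-\mathbf{p}^*\|$. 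Unrolling and applying a discrete Young/Cauchy--Schwarz inequality (the geometric weights summing to $(1-L_{\boldsymbol{\Gamma}})^{-1}$) gives $\sum_{t}\|\mathbf{p}^{(t)}-\mathbf{p}^*\|_2^2\lesssim(1-L_{\boldsymbol{\Gamma}})^{-2}\sum_{t}\|\boldsymbol{\zeta}^{(t)}\|_2^2+\mathrm{const}\lesssim\sum_i\sum_{t}(\zeta_i^{(t)})^2+\mathrm{const}\lesssim\sum_i\sum_{t}\sigma_i^{(t-1)}(p_i^{(t)},\delta)+\mathrm{const}$.

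It remains to bound $\sum_{t}\sigma_i^{(t-1)}(p_i^{(t)},\delta)$ per seller, which is the crux. Pulling out the slowly growing radius $\rho_i^{(t-1)}(\delta)=\widetilde{O}(\sqrt{N})$ and the bounded price $p_i^{(t)}\le\bar p_i$, and using Cauchy--Schwarz over $t$, reduces this to bounding $\sum_{t}\|(p_i^{(t)},\mathbf{p}_{-i}^{(t-1)})\|_{V_i^{(t-1),-1}}^2$. Here lies the main obstacle: the vector in the quadratic form, $(p_i^{(t)},\mathbf{p}_{-i}^{(t-1)})$, carries the current own-price but the \emph{lagged} competitor prices, whereas $V_i^{(t-1)}$ is updated by the rank-one term $\mathbf{p}^{(t)}\mathbf{p}^{(t)\top}$ using the \emph{current} competitor prices -- a mismatch caused by the post-action revelation of the competitors' context that defeats the classical elliptical potential lemma. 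The adapted \cref{lemma:Elliptical-Lemma} is designed precisely to handle this and to recover the usual $\widetilde{O}(N)$ potential bound. Feeding it back gives a per-seller width-sum of order $\widetilde{O}(N\sqrt{T})$, hence $\sum_i\sum_{t}\sigma_i^{(t-1)}=\widetilde{O}(N^2\sqrt{T})$, which establishes \eqref{eq:conv_sum_NE} on $\mathcal{E}$. Off $\mathcal{E}$, which has probability $\le\delta=T^{-\gamma}$, the bounded per-round revenue gaps contribute at most $(\mathrm{const})\cdot T\cdot T^{-\gamma}=O(T^{1-\gamma})=O(1)$ to the expected regret for $\gamma>1$, producing the additive constant $C$. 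A final check confirms that for $\lambda\in[O(1/\sqrt{T}),O(1)]$ both $\rho_i$ and the log-determinant term $\log(\det V_i^{(T)}/\det V_i^{(0)})$ stay $\widetilde{O}(\cdot)$, so the stated rate is uniform over this range.
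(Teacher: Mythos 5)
Your proposal is correct and mirrors the paper's own proof essentially step for step: the same Taylor/Lipschitz reduction of regret to $\sum_t\mathbb{E}\|\mathbf{p}^{(t)}-\mathbf{p}^*\|_2^2$, the same optimism-plus-concentration bound $(\zeta_i^{(t)})^2\lesssim\sigma_i^{(t-1)}$ (\cref{prop_SW_GLM_anytime_upper}), the same contraction unrolling with geometric weights (\cref{lemma:tecnincal_summation}), the same invocation of the adapted elliptical potential lemma (\cref{lemma:Elliptical-Lemma}) for the timing-mismatched widths, and the same $\delta=T^{-\gamma}$ treatment of the off-event. The only cosmetic deviations are that you apply Cauchy--Schwarz over $t$ before the potential bound (the paper's lemma bounds the first-power sum directly, doing the same thing internally) and that your failure probability should carry the union-bound factor $2N\delta$ rather than $\delta$, which does not affect the rate.
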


\begin{proof}[{\bf Proof Sketch}]
Using Taylor expansion around $\Gamma_i(\mathbf{p}_{-i}^{(t)})$, we can write $\operatorname{rev}_i\big(\Gamma_i(\mathbf{p}_{-i}^{(t)}) | \mathbf{p}_{-i}^{(t)}\big)
-
\operatorname{rev}_i\big(p_i^{(t)} | \mathbf{p}_{-i}^{(t)}\big)\lesssim (\Gamma_i(\mathbf{p}_{-i}^{(t)})-p_i^{(t)})^2\leq \|\boldsymbol{\Gamma}(\mathbf{p}^{(t)})-\mathbf{p}^{(t)}\|^2_2$,
and summing over $t$ and taking $\mathbb{E}$ both sides we have
\begin{align}\label{eq:inst_reg_remark}
\textstyle\mathrm{Reg}_i(T)\lesssim \sum_{t=1}^T\mathbb{E} \|\boldsymbol{\Gamma}(\mathbf{p}^{(t)})-\mathbf{p}^{(t)}\|_2^2.
\end{align}
Adding and subtracting the fixed point $\mathbf{p}^{\star}$ from the RHS of \eqref{eq:inst_reg_remark} and using that $\mathbf{p}^{\star}=\boldsymbol{\Gamma}(\mathbf{p}^{\star})$ along with Lipschitzianity of $\boldsymbol{\Gamma}$ we get that $\mathrm{Reg}_i(T)$ is bounded by
\begin{align}\label{reg_NE_control}
\textstyle\sum_{t=1}^T\mathbb{E} \|\mathbf{p}^{(t)}-\mathbf{p}^{\star}\|_2^2 =\sum_{t=1}^T\mathbb{E} \|\widehat{\boldsymbol{\Gamma}}^{(t)}(\mathbf{p}^{(t-1)})-\mathbf{p}^{\star}\|_2^2.
\end{align}
Now let $\sigma^{(t)}(\delta)\triangleq \sum_{i \in [N]} \sigma_i^{(t)}(p_i^{(t+1)},\delta)$. Adding and subtracting $\boldsymbol{\Gamma}(\mathbf{p}^{(t-1)})$ from  $\|\widehat{\boldsymbol{\Gamma}}^{(t)}(\mathbf{p}^{(t-1)})-\mathbf{p}^{\star}\|_2$ and iterating this inequality $T$ times, we get
\begin{align}\label{reg_NE_control_2}
\textstyle \|\widehat{\boldsymbol{\Gamma}}^{(t)}(\mathbf{p}^{(t-1)})-\mathbf{p}^{\star}\|\lesssim \left(\sum_{j=0}^{t-1}L^{j}_{\boldsymbol{\Gamma}} \sqrt{\sigma^{(t-j-1)}(\delta)}\right)+L^{t-1}_{\boldsymbol{\Gamma}},
\end{align}
which holds with high probability by the following \cref{prop_SW_GLM_anytime_upper}.

\begin{proposition} 
\label{prop_SW_GLM_anytime_upper}
Let the assumptions of the theorem hold. For every $0<\delta<1$, the event $\cap_{t \in [T]}\{\|\widehat{\boldsymbol{\Gamma}}^{(t)}(\mathbf{p}^{(t-1)})-\boldsymbol{\Gamma}(\mathbf{p}^{(t-1)})\|_2^2 \lesssim \sigma^{(t-1)}(\delta)\}$, holds with probability at least $1-2N\delta$.
\end{proposition}
Plugging the bound from \eqref{reg_NE_control_2}
into the expectation \eqref{reg_NE_control} (and ignoring the expectation over the complementary probability set, which can be shown to be negligible), and using that $L_{\boldsymbol{\Gamma}}<1$, we obtain $\textstyle\mathrm{Reg}_i(T)\lesssim \mathbb{E}\sum_{t=0}^{T-1} \sigma^{(t)}(\delta)$. Now, for simplicity of the proof's sketch, consider $\lambda$ a fixed constant and choose $\delta = 1/T^2$. From the definition of $\rho_i^{(t)}(\delta)$ in \cref{def:UCB} it is easy to see that  $\rho_i^{(t)}(\delta)\lesssim \sqrt{N\log(T)}$, hence 
\begin{align*}
\textstyle \sigma^{(t)}(\delta)=\sum_{j \in [N]} \rho_j^{(t)}(\delta)p_j^{(t+1)} \| (p_j^{(t+1)},\mathbf{p}_{-j}^{(t)} )\|_{V_{j}^{(t),-1}}\lesssim \sqrt{N\log(T)}\sum_{j \in [N]} \| (p_j^{(t+1)},\mathbf{p}_{-j}^{(t)} )\|_{V_{j}^{(t),-1}}.
\end{align*}
Therefore we obtain $\mathrm{Reg}_i(T) \lesssim \sqrt{N \log(T)} \cdot K(T)$, where $K(T) = \mathbb{E}\sum_{t=0}^{T-1} \sum_{j \in [N]} \| (p_j^{(t+1)},\mathbf{p}_{-j}^{(t)} )\|_{V_{j}^{(t),-1}}$.
The reader can immediately recognize the correspondence of $K(T)$ with the elliptical potential lemma used in the bandit literature, with the only difference being that, in a multi-agent setting, the presence of $p_j^{(t+1)}$ does not allow a direct application of this lemma. However, in \cref{lemma:Elliptical-Lemma} we prove that it is still possible to bound $K(T)$ by $O(N^{3/2}\sqrt{T\log(T)})$ without additional assumptions. This completes the regret convergence since we obtain $\mathrm{Reg}_i(T) \lesssim \sqrt{N \log(T)} \cdot K(T) = O(N^{2}\sqrt{T}\log(T))$.
The convergence of $\sum_{t=1}^T\mathbb{E} \|\mathbf{p}^{(t)}-\mathbf{p}^{\star}\|_2^2$ comes from \eqref{reg_NE_control}.
\end{proof}

\begin{remark}[Optimality of the $\ML$ algorithm]
Our rate aligns with the linear-case upper bound in \citet{li2024lego} for both $T$ and $N$. The algorithm of \citet{li2024lego} hits the minimax-optimal $\widetilde{\Theta}(\sqrt{T})$ regret under linear demand. Because our framework strictly generalizes the linear case and $\ML$ still achieves $\widetilde{\mathcal O}(\sqrt{T})$ regret, our mechanism is order-optimal -- minimax optimal up to logarithmic factors -- in this more general setting. However, it remains unclear whether the $N^2$ dependence is optimal, specifically, whether the individual regret is indeed $\Omega(N^2 \sqrt{T})$. Proving such a result would require a deeper analysis, and we view this as an interesting and worthwhile direction for future research.
\end{remark}

\begin{remark}[Computational complexity.]\label{remark:comput-complexity}
At round $t$, seller $i$ updates the matrix $V_i^{(t)}=V_i^{(t-1)}+\mathbf p^{(t)}\mathbf p^{(t)\top}$ and its inverse. Using the Sherman--Morrison formula to update $V_i^{(t),-1}$ costs $O(N^2)$ time (and forming $V_i^{(t)}$ itself is also $O(N^2)$). To evaluate the UCB bonus for a candidate price $p_i$, one needs $\|(p_i,\mathbf p_{-i})\|_{V_i^{(t),-1}}$, which requires a matrix--vector product $V_i^{(t),-1}\mathbf p$ in $O(N^2)$ time and then a dot product in $O(N)$. Both the true best response and the optimistic best response solve a one-dimensional concave maximization problem. In practice, this can be implemented via a 1-D convex optimization (e.g., bisection or Newton), which is costless, i.e., $O(1)$ (this implies that the UCB optimization is tractable and does not suffer from the NP-hardness of high-dimensional problems). Over $T$ rounds, the time is $O(TN^2)$ per seller, and $O(TN^3)$ if all $N$ sellers are run. Space usage per seller is $O(N^2)$ to store $V_i^{(t)}$ and (optionally) $V_i^{(t),-1}$ plus $O(N)$ for current vectors; if past histories are not stored, $O(N^2)$ dominates.
\end{remark}

\textbf{Convergence to Nash Equilibrium.} The bound in \eqref{eq:conv_sum_NE} alone does not imply a convergence rate for $\mathbb{E} \left[\|\mathbf{p}^{(T)}-\mathbf{p}^{\star}\|_2^2\right]$ as $T\to\infty$. Indeed, a bound on the partial sums only yields pointwise convergence if the sequence is sufficiently regular (e.g., eventually monotone), or under an initial i.i.d. pricing exploration phase, like in \citet{li2024lego} and \citet{bracale2025revenue}, which ensures that the estimation error satisfies $\|\widehat{\boldsymbol\theta}_i^{(T)}-\boldsymbol\theta_{i,0}\|_2=O_P(T^{-1/2})$ under standard regularity conditions, which in turn yields the parametric-rate convergence $\mathbb{E} \left[\|\mathbf{p}^{(T)}-\mathbf{p}^{\star}\|_2^2\right] =O(T^{-1/2})$. In our setting, however, prices $\{\mathbf{p}^{(t)}\}_{t\le T}$ are generated \emph{adaptively} (their selection depends on past noise realizations), and we do not impose an i.i.d.\ excitation phase. In the absence of this phase, the estimation error can typically be controlled only along directions that are sufficiently excited by the played actions. As a result, $\|\widehat{\boldsymbol\theta}_i^{(T)}-\boldsymbol\theta_{i,0}\|_2$ does not generally decay at the parametric rate $O_P(T^{-1/2})$, and one should not expect a pointwise convergence rate for $\mathbb{E} \left[\|\mathbf{p}^{(T)}-\mathbf{p}^{\star}\|_2^2\right]$ based solely on
\eqref{eq:conv_sum_NE}. Nevertheless, as we show in \cref{theorem_regret_SW}, our primary objective is regret optimality, which can be achieved even without strong pointwise convergence of the price sequence to a Nash equilibrium. This is consistent with the broader bandit literature, where regret-optimal algorithms often rely on adaptive (non-i.i.d.) action selection, as in OFU/UCB-style methods for linear and generalized linear bandits \citep{dani2008stochastic,abbasi2011improved,filippi2010parametric,russac2019weighted}.



\section{Limitations and future directions.} \label{sec:conclusion}

We assume a \emph{known} increasing link $\mu_i$, strong concavity of per-round revenues, and a contraction-type condition that guarantees a unique NE. None of these is new: they are standard in dynamic pricing and learning-in-games and \emph{strictly include} the linear-demand setting of \citet{li2024lego} as a special case. These conditions ensure well-posed best responses and well-conditioned information for concentration, but they also circumscribe our scope to stationary parameters and synchronous price moves under a unique-equilibrium regime.

A natural avenue is to relax link knowledge by replacing it with shape constraints (e.g., monotonicity or $s$-concavity) and to derive calibration-free confidence bonuses that adapt to local curvature while remaining free of any initial exploration phase. On the game-theoretic side, moving beyond global contractions -- to allow multiple or set-valued equilibria -- suggests a variational-inequality/monotone-operator treatment to obtain no-regret convergence. Algorithmically, Bayesian or Thompson sampling approaches to GLMs under our information structure (where rivals’ revenues remain unobserved) are particularly appealing, as they tend to yield less conservative confidence bounds in practice compared to UCB-based methods. Finally, incorporating nonstationarity in $\boldsymbol{\theta}_{i,0}$, inventory constraints, and asynchronous observability of rivals’ prices would bring the model even closer to more real market scenarios.



\bibliography{main}
\bibliographystyle{apalike}

\newpage
\appendix

\section*{\centering \Large Appendix of \\
Online Price Competition under Generalized Linear Demands
}
\appendix

\startcontents[appendix]

\section*{Appendix Contents}
\printcontents[appendix]{}{1}{}

\newpage

\section{Proof of Proposition \ref{prop_SW_GLM_anytime_upper}}
\label{subsec:prop_corollary_prop_SW_GLM_anytime}

We want to prove that, under the assumptions of \eqref{theorem_regret_SW}, for every $0<\delta<1$,
$$
\mathbb{P}\left(\cap_{t \in [T]}\left\{\|\widehat{\boldsymbol{\Gamma}}^{(t)}(\mathbf{p}^{(t-1)})-\boldsymbol{\Gamma}(\mathbf{p}^{(t-1)})\|_2^2 \leq C_2 \sigma^{(t-1)}(\delta)\right\}\right)\geq 1-2N\delta,
$$
where $\sigma^{(t)}(\delta)\triangleq \sum_{i \in [N]} \sigma_i^{(t)}(p_i^{(t+1)},\delta)$ and $C_2 = \frac{4}{\min_{i\in [N]}\xi_i}$, with $\xi_i>0$ such that for all $x,y\in\mathcal{P}_i$ and all
$\mathbf{p}_{-i}$, 
$$
\operatorname{rev}_i(y | \mathbf{p}_{-i})
-
\operatorname{rev}_i(x | \mathbf{p}_{-i})\leq
\big\langle \nabla_{x}\operatorname{rev}_i(x | \mathbf{p}_{-i}), y-x\big\rangle
-\frac{\xi_i}{2} \|y-x\|^2
$$
For every $i \in [N]$ the constant $\xi_i$ is independent of $\mathbf{p}_{-i}$ by Assumption \eqref{strong-concavity-rev}. We recall the definitions
$$
\operatorname{rev}_i(p_i | \mathbf{p}_{-i}) = p_i\mu_i( -\beta_{i,0}p_i+\langle \boldsymbol{\gamma}_{i,0},\mathbf{p}_{-i}\rangle), \qquad \forall (p_i , \mathbf{p}_{-i}) \in \mathcal{P},
$$
and
$$
\widehat{ \operatorname{rev}}_i^{(t)}(p_i | \mathbf{p}_{-i}) =  p_i\mu_{i}( -\widetilde{\beta}_i^{(t)\nML}p_i+\langle\widetilde{\boldsymbol{\gamma}}_i^{(t)\nML}, \mathbf{p}_{-i}\rangle), \qquad \forall (p_i , \mathbf{p}_{-i}) \in \mathcal{P}.
$$
We also define
$$
\textstyle p_i^{(t)\star} =\Gamma_i(\mathbf{p}^{(t)}_{-i})=  \underset{p_i \in \mathcal{P}_i}{\operatorname{argmax}} \, \operatorname{rev}_i(p_i | \mathbf{p}_{-i}^{(t)}) , \qquad \forall  \mathbf{p}_{-i} ^{(t)}\in \mathcal{P}_{-i},
$$
and 
$$
\textstyle p_i^{(t+1)} =\widehat{\Gamma}^{(t+1)}_i(\mathbf{p}^{(t)}_{-i}) \in  \underset{p_i \in \mathcal{P}_i}{\operatorname{argmax}} \, \text{UCB}_i^{(t+1)}(\cdot |\mathbf{p}_{-i}^{(t)}) ,\qquad \forall  \mathbf{p}_{-i} ^{(t)}\in \mathcal{P}_{-i},
$$
where we recall from \cref{def:UCB_conditioned} that
\begin{align*}
\mathrm{UCB}_i^{(t+1)}(p_i| \mathbf{p}_{-i}^{(t)})= \widehat{ \operatorname{rev}}_i^{(t+1)}(p_i | \mathbf{p}_{-i}^{(t)}) +\sigma_i^{(t)}(p_i,\delta).
\end{align*}
For the first part of the proof, we follow similar steps to \citet{russac2020algorithms}, Corollary 1. We start by upper-bounding the following difference:
\begin{align*}
\operatorname{rev}_i(p_i^{(t)\star} | \mathbf{p}_{-i}^{(t)}) - \operatorname{rev}_i(p_i^{(t+1)} | \mathbf{p}_{-i}^{(t)})
&= \underbrace{\operatorname{rev}_i(p_i^{(t)\star} | \mathbf{p}_{-i}^{(t)})-\widehat{\operatorname{rev}}^{(t+1)}_i(p_i^{(t)\star} | \mathbf{p}_{-i}^{(t)})}_{A1} \\
&+\underbrace{\widehat{\operatorname{rev}}^{(t+1)}_i(p_i^{(t)\star} | \mathbf{p}_{-i}^{(t)})-  \widehat{\operatorname{rev}}^{(t+1)}_i(p_i^{(t+1)} | \mathbf{p}_{-i}^{(t)})}_{A2}\\
&+  \underbrace{\widehat{\operatorname{rev}}^{(t+1)}_i(p_i^{(t+1)} | \mathbf{p}_{-i}^{(t)})-\operatorname{rev}_i(p_i^{(t+1)} | \mathbf{p}_{-i}^{(t)})}_{A3}.
\end{align*}

To upper bound $A1$ and $A3$ we use the following \cref{prop_SW_GLM_concentration}, whose proof can be found in \cref{subsection:prop_SW_GLM_concentration}.
\begin{lemma}
\label{prop_SW_GLM_concentration}
Let assumptions \eqref{assumption_param}, \eqref{assumption_mu} and \eqref{strong-concavity-rev} hold. Recall that for every $t=0,1,\dots, T-1$ we have defined
\begin{align*}
&\sigma_i^{(t)}(p_i,\delta) = \rho_i^{(t)}(\delta) p_i \lVert (p_i,\mathbf{p}_{-i}^{(t)}) \rVert_{V_i^{(t),-1}},\\
&\rho_i^{(t)}(\delta) = \frac{2 L_{\mu_i}}{c_{\mu_i}}  \bigg( \mathrm{c}^{(t)}_i(\delta)
+ B_{\theta_i}\sqrt{c_{\mu_i} \lambda}  \bigg),\quad \text{and} \quad \mathrm{c}^{(t)}_i(\delta) =  L_{\mu_i} \sqrt{2 \log\left(\frac{T}{\delta}\right) + N \log\left( 1 + \frac{B_p^2 t}{N \lambda} \right) }.
\end{align*}

Let $0<\delta <1$ and 
$t \in [T]$. Let $\mathbf{p}$ be any 
$\mathcal{P}$-valued (possibly random) price vector. Then, simultaneously for all $t \in [T]$,
\begin{align*}
|\operatorname{rev}_i(p_i|\mathbf{p}_{-i})- \widehat{\operatorname{rev}}^{(t)}_i(p_i|\mathbf{p}_{-i})|= \big|p_i\mu_i(\langle\mathbf{p}, \boldsymbol{\theta}_{i,0}\rangle) - p_i\mu_i(\langle\mathbf{p} ,
\widetilde{\boldsymbol{\theta}}_i^{(t)\nML}\rangle) \big|\leq \rho_i^{(t-1)}(\delta)p_i\lVert \mathbf{p} \rVert_{V_{i}^{(t-1),-1}}
\end{align*}
holds with probability higher than $1- \delta$.
\end{lemma}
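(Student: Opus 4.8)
The plan is to control the revenue gap in three reductions --- first linearize $\mu_i$, then transfer the resulting linear functional of the estimation error onto the surrogate map $g_i^{(t-1)}$, and finally bound the surrogate error by a self-normalized martingale term plus a regularization bias. Since $\mu_i\in\mathrm C^2$ with $\mu_i'\le L_{\mu_i}$ by \cref{assumption_mu}, the mean value theorem gives, at an intermediate point,
\[
\bigl|p_i\mu_i(\langle\mathbf p,\boldsymbol\theta_{i,0}\rangle)-p_i\mu_i(\langle\mathbf p,\widetilde{\boldsymbol\theta}_i^{(t)\nML}\rangle)\bigr|\le L_{\mu_i}\,p_i\,\bigl|\langle\mathbf p,\boldsymbol\theta_{i,0}-\widetilde{\boldsymbol\theta}_i^{(t)\nML}\rangle\bigr|,
\]
so it suffices to control the scalar $\langle\mathbf p,\boldsymbol\theta_{i,0}-\widetilde{\boldsymbol\theta}_i^{(t)\nML}\rangle$ and extract a factor $\|\mathbf p\|_{V_i^{(t-1),-1}}$.

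Next I would exploit the curvature of the surrogate. Its Jacobian $\nabla g_i^{(t-1)}(\boldsymbol\theta)=\sum_{s=1}^{t-1}\mu_i'(\langle\mathbf p^{(s)},\boldsymbol\theta\rangle)\mathbf p^{(s)}\mathbf p^{(s)\top}+\lambda I_N$ satisfies $\nabla g_i^{(t-1)}\succeq c_{\mu_i}V_i^{(t-1)}$ by the lower bound $\mu_i'\ge c_{\mu_i}$ of \cref{assumption_mu} and the definition of $V_i^{(t-1)}$ in \eqref{eq_Design_matrix_SW}. Writing $g_i^{(t-1)}(\widetilde{\boldsymbol\theta}_i^{(t)\nML})-g_i^{(t-1)}(\boldsymbol\theta_{i,0})=G(\widetilde{\boldsymbol\theta}_i^{(t)\nML}-\boldsymbol\theta_{i,0})$ with $G=\int_0^1\nabla g_i^{(t-1)}(\cdot)\,d\tau\succeq c_{\mu_i}V_i^{(t-1)}$, a short positive-definite computation (namely $GV^{-1}G\succeq c_{\mu_i}^2 V$ whenever $G\succeq c_{\mu_i}V\succ0$) yields $\|\widetilde{\boldsymbol\theta}_i^{(t)\nML}-\boldsymbol\theta_{i,0}\|_{V_i^{(t-1)}}\le c_{\mu_i}^{-1}\|g_i^{(t-1)}(\widetilde{\boldsymbol\theta}_i^{(t)\nML})-g_i^{(t-1)}(\boldsymbol\theta_{i,0})\|_{V_i^{(t-1),-1}}$. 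Cauchy--Schwarz in $\|\cdot\|_{V_i^{(t-1)}}$ then gives $|\langle\mathbf p,\boldsymbol\theta_{i,0}-\widetilde{\boldsymbol\theta}_i^{(t)\nML}\rangle|\le c_{\mu_i}^{-1}\|\mathbf p\|_{V_i^{(t-1),-1}}\,\|g_i^{(t-1)}(\widetilde{\boldsymbol\theta}_i^{(t)\nML})-g_i^{(t-1)}(\boldsymbol\theta_{i,0})\|_{V_i^{(t-1),-1}}$. (Here one must check that the linearized arguments stay in the range where $\mu_i'\ge c_{\mu_i}$, which is ensured by the compactness of $\mathcal P$ and the constraint $\|\widetilde{\boldsymbol\theta}_i^{(t)\nML}\|_2\le B_{\theta_i}$.)

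It then remains to bound $\|g_i^{(t-1)}(\widetilde{\boldsymbol\theta}_i^{(t)\nML})-g_i^{(t-1)}(\boldsymbol\theta_{i,0})\|_{V_i^{(t-1),-1}}$. Since $\boldsymbol\theta_{i,0}$ is feasible and $\widetilde{\boldsymbol\theta}_i^{(t)\nML}$ minimizes the $V^{-1}$-distance of $g_i^{(t-1)}(\cdot)$ to $g_i^{(t-1)}(\widehat{\boldsymbol\theta}_i^{(t)\nML})$ over the feasible ball by \eqref{eq:theta_tilde_SW}, the optimality $\|g_i^{(t-1)}(\widehat{\boldsymbol\theta}_i^{(t)\nML})-g_i^{(t-1)}(\widetilde{\boldsymbol\theta}_i^{(t)\nML})\|_{V_i^{(t-1),-1}}\le\|g_i^{(t-1)}(\widehat{\boldsymbol\theta}_i^{(t)\nML})-g_i^{(t-1)}(\boldsymbol\theta_{i,0})\|_{V_i^{(t-1),-1}}$ plus a triangle inequality gives a factor $2$ of the latter. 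The first-order condition \eqref{eq_MLE_SW} identifies $g_i^{(t-1)}(\widehat{\boldsymbol\theta}_i^{(t)\nML})=\sum_{s=1}^{t-1}y_i^{(s)}\mathbf p^{(s)}$, so this difference equals $S_t-\lambda\boldsymbol\theta_{i,0}$ with $S_t\triangleq\sum_{s=1}^{t-1}\eta_i^{(s)}\mathbf p^{(s)}$. The bias term is deterministic: from $V_i^{(t-1)}\succeq(\lambda/c_{\mu_i})I_N$ we get $V_i^{(t-1),-1}\preceq(c_{\mu_i}/\lambda)I_N$, whence $\lambda\|\boldsymbol\theta_{i,0}\|_{V_i^{(t-1),-1}}\le B_{\theta_i}\sqrt{c_{\mu_i}\lambda}$.

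The main obstacle is the self-normalized concentration of $S_t$, where the game's simultaneous-move timing surfaces. I would choose the filtration $\mathcal F_{s}=\sigma(\mathbf p^{(1)},y_i^{(1)},\dots,\mathbf p^{(s)},y_i^{(s)},\mathbf p^{(s+1)})$ so that each $\mathbf p^{(s)}$ is predictable while $\eta_i^{(s)}$ is $\mathcal F_s$-measurable with $\mathbb E[\eta_i^{(s)}\mid\mathcal F_{s-1}]=0$ and conditionally $L_{\mu_i}$-subgaussian by \cref{lemma:subgaussian_noise}; this makes $S_t$ a vector-valued martingale of the required type. The Abbasi-Yadkori--type self-normalized inequality then bounds $\|S_t\|_{V_i^{(t-1),-1}}^2\lesssim L_{\mu_i}^2\bigl(\log(1/\delta)+\tfrac12\log(\det V_i^{(t-1)}/\det(\tfrac{\lambda}{c_{\mu_i}}I_N))\bigr)$; bounding the log-determinant by $N\log(1+B_p^2(t-1)/(N\lambda))$ (up to the curvature constant) via $\det V_i^{(t-1)}\le(\tfrac{\lambda}{c_{\mu_i}}+B_p^2(t-1)/N)^N$ and applying a union bound over $t\in[T]$ at level $\delta/T$ recovers $\|S_t\|_{V_i^{(t-1),-1}}\le\mathrm c_i^{(t-1)}(\delta)$ simultaneously for all $t$ with probability at least $1-\delta$. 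On this event the bounds above are deterministic in $\mathbf p$, so collecting the constants $L_{\mu_i}\cdot c_{\mu_i}^{-1}\cdot 2\cdot(\mathrm c_i^{(t-1)}(\delta)+B_{\theta_i}\sqrt{c_{\mu_i}\lambda})$ reproduces exactly $\rho_i^{(t-1)}(\delta)$ and the stated bound holds uniformly over $\mathbf p\in\mathcal P$ and $t\in[T]$.
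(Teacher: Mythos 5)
Your proposal is correct and follows essentially the same route as the paper's proof: Lipschitz linearization of $\mu_i$, the fundamental-theorem-of-calculus representation $g_i^{(t-1)}(\boldsymbol\theta_{i,0})-g_i^{(t-1)}(\widetilde{\boldsymbol\theta}_i^{(t)\nML})=G(\boldsymbol\theta_{i,0}-\widetilde{\boldsymbol\theta}_i^{(t)\nML})$ with $G\succeq c_{\mu_i}V_i^{(t-1)}$, the factor-$2$ projection/triangle argument, the decomposition into the martingale $S_i^{(t-1)}$ plus the $\lambda\boldsymbol\theta_{i,0}$ bias, and a self-normalized concentration bound (the paper cites Proposition 7 of Russac et al., you re-derive it via Abbasi--Yadkori plus a union bound over $t$ -- same ingredient, same $\log(T/\delta)$ form). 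Your only cosmetic deviation is performing Cauchy--Schwarz as $\lVert\mathbf p\rVert_{V^{-1}}\lVert\boldsymbol\theta_{i,0}-\widetilde{\boldsymbol\theta}_i^{(t)\nML}\rVert_{V}$ via $GV^{-1}G\succeq c_{\mu_i}^2V$, whereas the paper works in the $G^{-1}$-norm and then uses $G^{-1}\preceq c_{\mu_i}^{-1}V^{-1}$; both yield the same constant $2L_{\mu_i}/c_{\mu_i}$.
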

Thanks to Lemma \ref{prop_SW_GLM_concentration}, we can give an upper bound for the term $A1$ and for the term $A3$. With a union bound, we can simultaneously upper bound $A1$ and $A3$ for all $t \in [T]$ and the following holds

\begin{equation}
\label{eq:simult_upper_bound}
\mathbb{P}\left(\forall t \in [T], A1
 \leq \rho_i^{(t)}(\delta) p_i^{(t)\star}\| (p_i^{(t)\star},\mathbf{p}_{-i}^{(t)} )\|_{V_{i}^{(t),-1}}  \cap 
   A3
\leq \rho_i^{(t)}(\delta)p_i^{(t+1)}\| (p_i^{(t+1)},\mathbf{p}_{-i}^{(t)} )\|_{V_{i}^{(t),-1}} \right) \geq 1- 2\delta
\end{equation}

Let $E$ be the event in \eqref{eq:simult_upper_bound}. We now upper-bound $A2$.

\begin{align*}
A2 &= \widehat{\operatorname{rev}}^{(t+1)}_i(p_i^{(t)\star} | \mathbf{p}_{-i}^{(t)})-  \widehat{\operatorname{rev}}^{(t+1)}_i(p_i^{(t+1)} | \mathbf{p}_{-i}^{(t)}) \\
&= \underset{=\mathrm{UCB}_i^{(t+1)}(p_i^{(t)\star}| \mathbf{p}_{-i}^{(t)}) }{\underbrace{\widehat{\operatorname{rev}}^{(t+1)}_i(p_i^{(t)\star} | \mathbf{p}_{-i}^{(t)}) + \rho_i^{(t)}(\delta) p_i^{(t)\star}\| (p_i^{(t)\star},\mathbf{p}_{-i}^{(t)} )\|_{V_{i}^{(t),-1}}}} \\
& \qquad - 
\rho_i^{(t)}(\delta) p_i^{(t)\star}\| (p_i^{(t)\star},\mathbf{p}_{-i}^{(t)} )\|_{V_{i}^{(t),-1}}+\rho_i^{(t)}(\delta) p_i^{(t+1)} \lVert (p_i^{(t+1)},\mathbf{p}_{-i}^{(t)}) \rVert_{V_i^{(t),-1}} \\
& \qquad  \underset{=-\mathrm{UCB}_i^{(t+1)}(p_i^{(t+1)}| \mathbf{p}_{-i}^{(t)}) }{\underbrace{-\rho_i^{(t)}(\delta) p_i^{(t+1)} \lVert (p_i^{(t+1)},\mathbf{p}_{-i}^{(t)}) \rVert_{V_i^{(t),-1}}- \widehat{\operatorname{rev}}^{(t+1)}_i(p_i^{(t+1)} | \mathbf{p}_{-i}^{(t)})}} \\
&= \underset{\leq 0 \text{ by def of }p_i^{(t+1)} }{\underbrace{\mathrm{UCB}_i^{(t+1)}(p_i^{(t)\star}| \mathbf{p}_{-i}^{(t)}) -\mathrm{UCB}_i^{(t+1)}(p_i^{(t+1)}| \mathbf{p}_{-i}^{(t)})}}\\
& \quad -
\rho_i^{(t)}(\delta) p_i^{(t)\star}\| (p_i^{(t)\star},\mathbf{p}_{-i}^{(t)} )\|_{V_{i}^{(t),-1}}+\rho_i^{(t)}(\delta) p_i^{(t+1)} \lVert (p_i^{(t+1)},\mathbf{p}_{-i}^{(t)}) \rVert_{V_i^{(t),-1}}\\
&\leq \rho_i^{(t)}(\delta) p_i^{(t+1)}\| (p_i^{(t+1)},\mathbf{p}_{-i}^{(t)} )\|_{V_{i}^{(t),-1}}
- 
\rho_i^{(t)}(\delta) p_i^{(t)\star}\| (p_i^{(t)\star},\mathbf{p}_{-i}^{(t)} )\|_{V_{i}^{(t),-1}}.
\end{align*}

Now we put $A1,A2,A3$ together. Under the event $E$ (which occurs with a probability higher than $1-2\delta$), we have
\begin{align*}
\operatorname{rev}_i(p_i^{(t)\star} | \mathbf{p}_{-i}^{(t)}) - \operatorname{rev}_i(p_i^{(t+1)} | \mathbf{p}_{-i}^{(t)})
&\leq \underbrace{\rho_i^{(t)}(\delta)p_i^{(t)\star} \| (p_i^{(t)\star},\mathbf{p}_{-i}^{(t)} )\|_{V_{i}^{(t),-1}}}_{\text{coming from} A1} \\
& \qquad + 
\underbrace{\rho_i^{(t)}(\delta) p_i^{(t+1)}\| (p_i^{(t+1)},\mathbf{p}_{-i}^{(t)} )\|_{V_{i}^{(t),-1}}
- 
\rho_i^{(t)}(\delta) p_i^{(t)\star}\| (p_i^{(t)\star},\mathbf{p}_{-i}^{(t)} )\|_{V_{i}^{(t),-1}}}_{\text{coming from} A2} \\
& \qquad 
+ \underbrace{\rho_i^{(t)}(\delta) p_i^{(t+1)}\| (p_i^{(t+1)},\mathbf{p}_{-i}^{(t)} )\|_{V_{i}^{(t),-1}}}_{\text{coming from} A3} \\
& \leq 2 \rho_i^{(t)}(\delta) p_i^{(t+1)}\| (p_i^{(t+1)},\mathbf{p}_{-i}^{(t)} )\|_{V_{i}^{(t),-1}} = 2 \sigma_i^{(t)}(p_i^{(t+1)},\delta).
\end{align*}

Up to here we proved that for $0<\delta<1$ and for every $i \in [N]$,
\begin{align}\label{ineq:high_prob_rev}
\mathbb{P}\left( \cap_{t \in [T]}\left\{\operatorname{rev}_i(p_i^{(t)\star} | \mathbf{p}_{-i}^{(t)}) - \operatorname{rev}_i(p_i^{(t+1)} | \mathbf{p}_{-i}^{(t)}) \leq 2 \sigma_i^{(t)}(p_i^{(t+1)},\delta) \right\} \right) \geq 1-2\delta.
\end{align}
Now, by \cref{strong-concavity-rev} there exists $\xi_i>0$ independent of $\mathbf{p}_{-i}^{(t)}$ such that for all $x,y\in\mathcal{P}_i$ and all $\mathbf{p}_{-i}\in \mathcal{P}_{-i}$, 
$$
\operatorname{rev}_i(y | \mathbf{p}_{-i}^{(t)})
-
\operatorname{rev}_i(x | \mathbf{p}_{-i}^{(t)})\leq
\big\langle \nabla_{x}\operatorname{rev}_i(x | \mathbf{p}_{-i}^{(t)}), y-x\big\rangle
-\frac{\xi_i}{2} \|y-x\|^2.
$$
replacing $x$ with $p_i^{(t)*}$ and $y$ with $p_i^{(t+1)}$, together with the optimality of $p_i^{(t)*}$, we recover that 
\begin{equation}\label{ineq:strong_conc_star}
(p_i^{(t)*}-p_i^{(t+1)})^2 \leq \frac{2}{\xi_i}[\operatorname{rev}_i(p_i^{(t)\star} | \mathbf{p}_{-i}^{(t)}) - \operatorname{rev}_i(p_i^{(t+1)} | \mathbf{p}_{-i}^{(t)})]
\end{equation}
Now define $C_{i,2} \triangleq \frac{4}{\xi_i}$, and the event
$$
\mathcal{B}_i \triangleq \cap_{t =1}^T \left\{(\Gamma_i(\mathbf{p}^{(t)}_{-i})-\widehat{\Gamma}_i^{(t+1)}(\mathbf{p}^{(t)}_{-i}) )^{2}\leq C_{i,2}  \sigma_i^{(t)}(p_i^{(t+1)},\delta)\right\}.
$$
Since $(\Gamma_i(\mathbf{p}^{(t)}_{-i})-\widehat{\Gamma}_i^{(t+1)}(\mathbf{p}^{(t)}_{-i}))^{2}=(p_i^{(t)*}-p_i^{(t+1)})^2$, then \eqref{ineq:high_prob_rev} and \eqref{ineq:strong_conc_star} imply that for $\delta \in (0,1)$ and for every $i \in [N]$, $\mathbb{P}(\mathcal{B}_i)\geq 1-2\delta$. Now define $\mathcal{B} = \cap_{i \in [N]} \mathcal{B}_i$. Note that $\mathcal{B}$ holds with probability at least $1-2N\delta$:
$$
\mathbb{P}(\mathcal{B}^c) = \mathbb{P}(\cup_{i \in [N]}\mathcal{B}_i^c)\leq \sum_{i \in [N]}\mathbb{P}(\mathcal{B}_i^c) = \sum_{i \in [N]}(1-\mathbb{P}(\mathcal{B}_i))\leq \sum_{i \in [N]}2\delta = 2N\delta.
$$
Now define
\begin{equation*}
C_2 \triangleq \max_{i \in [N]}\{C_{i,2}\} =\frac{4}{\min_{i\in[N] }\xi_i}, \qquad \sigma^{(t)}(\delta)\triangleq \sum_{i \in [N]} \sigma_i^{(t)}(p_i^{(t+1)},\delta), \qquad \forall t \geq 1.
\end{equation*}
In $\mathcal{B}$ we have that
\begin{align*}
\|\widehat{\boldsymbol{\Gamma}}^{(t)}(\mathbf{p}^{(t-1)})-\boldsymbol{\Gamma}(\mathbf{p}^{(t-1)})\|_2 &= \sqrt{\sum_{i\in [N]}(\widehat{\Gamma}^{(t)}_i(\mathbf{p}^{(t-1)}_{-i})-\Gamma_i(\mathbf{p}^{(t-1)}_{-i}))^2}\\
&\leq \sqrt{\max_{i \in [N]}\{C_{i,2}\}}\sqrt{\sum_{i \in [N]}\sigma_i^{(t-1)}(\delta)} \\
&= \sqrt{C_2 \sigma^{(t-1)}(\delta)}.
\end{align*}
Let 
$$
\mathcal{G}=\cap_{t \in [T]}\left\{\|\widehat{\boldsymbol{\Gamma}}^{(t)}(\mathbf{p}^{(t-1)})-\boldsymbol{\Gamma}(\mathbf{p}^{(t-1)})\|_2^2 \leq C_2 \sigma^{(t-1)}(\delta)\right\}.
$$
Since $\mathcal{B}$ implies $\mathcal{G}$, we have

$$
\mathbb{P}\left(\mathcal{G}\right)\geq \mathbb{P}\left(\mathcal{B}\right)\geq 1-2N\delta.
$$
This completes the proof.

\section{Proof of Lemma \ref{prop_SW_GLM_concentration}}
\label{subsection:prop_SW_GLM_concentration}
For simplicity of notation, we write $\widehat{\boldsymbol{\theta}}_i^{(t)}$ for $\widehat{\boldsymbol{\theta}}_i^{(t)\nML}$ and
$\widetilde{\boldsymbol{\theta}}_i^{(t)}$ for $\widetilde{\boldsymbol{\theta}}_i^{(t)\nML}$. This proof follows a similar structure to the proof of Proposition 1 in \citet{russac2020algorithms}. Recall the definitions
$$
\operatorname{rev}_i(p_i | \mathbf{p}_{-i}) = p_i\mu_i( -\beta_{i,0}p_i+\langle \boldsymbol{\gamma}_{i,0},\mathbf{p}_{-i}\rangle)=p_i\mu_i( \langle \mathbf{p},
\boldsymbol{\theta}_{i,0}\rangle),
$$
and
$$
\widehat{ \operatorname{rev}}_i^{(t)}(p_i | \mathbf{p}_{-i}) =  p_i\mu_{i}( -\widetilde{\beta}_i^{(t)\nML}p_i+\langle\widetilde{\boldsymbol{\gamma}}_i^{(t)\nML}, \mathbf{p}_{-i}\rangle)=p_i\mu_i(\langle \mathbf{p},
\widetilde{\boldsymbol{\theta}}_i^{(t)} \rangle).
$$
We recall the definition $g_i^{(t-1)}: \mathbb{R}^d \mapsto \mathbb{R}^d$:
$$
g_i^{(t-1)}(\boldsymbol{\theta}_i) = 
\sum_{s= 1}^{t-1} \mu_i(\mathbf{p}^{(s)\top} \boldsymbol{\theta}_i) \mathbf{p}^{(s)} + \lambda \boldsymbol{\theta}_i.
$$
Let $J_i^{(t-1)}$ denote the Jacobian matrix of $g_i^{(t-1)}$.  We have
$$
J_i^{(t-1)}(\boldsymbol{\theta}_i) = \sum_{s=1}^{t-1} \mu_i^{\prime}(\mathbf{p}^{(s)\top} \boldsymbol{\theta}_i) \mathbf{p}^{(s)} \mathbf{p}^{(s)\top} + \lambda I_N.
$$
Thanks to the definition of the estimator $\widehat{\boldsymbol{\theta}}_i^{(t)\nML}$ defined in \cref{eq_MLE_SW}, we have 
$$
g_i^{(t-1)}(\widehat{\boldsymbol{\theta}}_i^{(t)\nML}) = \sum_{s= 1}^{t-1} \mathbf{p}^{(s)} y_i^{(s)}.
$$
We also introduce the martingale
$$
S_i^{(t-1)}=\sum_{s=1}^{t-1} \mathbf{p}^{(s)} \eta_i^{(s)}.
$$
We define the $G_i^{(t-1)}(\boldsymbol{\theta}_{i,0},\widetilde{\boldsymbol{\theta}}_i^{(t)})$ matrix as follows,
$$
G_i^{(t-1)}(\boldsymbol{\theta}_{i,0},\widetilde{\boldsymbol{\theta}}_i^{(t)}) = \int_{0}^1 J_i^{(t-1)}( u \boldsymbol{\theta}_{i,0} + (1-u) \widetilde{\boldsymbol{\theta}}_i^{(t)})   du .
$$
The Fundamental Theorem of Calculus gives
\begin{equation}
\label{eq_g_t_G_t} 
g_i^{(t-1)}(\boldsymbol{\theta}_{i,0}) -  g_i^{(t-1)}(\widetilde{\boldsymbol{\theta}}_i^{(t)})  = G_i^{(t-1)}(\boldsymbol{\theta}_{i,0},\widetilde{\boldsymbol{\theta}}_i^{(t)})(\boldsymbol{\theta}_{i,0}- \widetilde{\boldsymbol{\theta}}_i^{(t)}) .
\end{equation}
Knowing that both $\boldsymbol{\theta}_{i,0}$ and $\widetilde{\boldsymbol{\theta}}_i^{(t)}$ have an L2-norm smaller than
$B_{\theta_i}$, $\forall u \in [0,1],  \lVert u \boldsymbol{\theta}_{i,0} + (1-u) \widetilde{\boldsymbol{\theta}}_i^{(t)}) \rVert_2 \leq B_{\theta_i}$.
This implies in particular that
\begin{equation}
\label{eq_G_t}
G_i^{(t-1)}(\boldsymbol{\theta}_{i,0},\widetilde{\boldsymbol{\theta}}_i^{(t)}) \geq c_{\mu_i} \left( \sum_{s= 1}^{t-1} \mathbf{p}^{(s)} \mathbf{p}^{(s)\top} + \frac{\lambda} {c_{\mu_i}} I_N  \right) = c_{\mu_i} V_{i}^{(t-1)} ,
\end{equation}
which in turn ensures $G_i^{(t-1)}(\boldsymbol{\theta}_{i,0},\widetilde{\boldsymbol{\theta}}_i^{(t)})$ is invertible. Let $\mathbf{p}$ be any price vector in $\mathcal{P}$ (possibly random) and $t$ be a fixed time instant,

\begin{align}
|\operatorname{rev}_i(p_i | \mathbf{p}_{-i}) - \widehat{ \operatorname{rev}}_i^{(t)}(p_i | \mathbf{p}_{-i})| \leq p_i |\mu_i( \langle \mathbf{p},
\boldsymbol{\theta}_{i,0}\rangle) - \mu_i(\langle \mathbf{p},
\widetilde{\boldsymbol{\theta}}_i^{(t)} \rangle) |
\end{align}
where
\begin{align*}
&|\mu_i( \langle \mathbf{p},
\boldsymbol{\theta}_{i,0}\rangle) - \mu_i(\langle \mathbf{p},
\widetilde{\boldsymbol{\theta}}_i^{(t)} \rangle) | \\
&\leq  L_{\mu_i}|\langle \mathbf{p}, \boldsymbol{\theta}_{i,0}-
\widetilde{\boldsymbol{\theta}}_i^{(t)} \rangle| \quad \text{(by Assumption \ref{assumption_mu})}\\
&= L_{\mu_i} |\mathbf{p}^{\top}  G_i^{(t-1),-1}(\boldsymbol{\theta}_{i,0},\widetilde{\boldsymbol{\theta}}_i^{(t)})(g_i^{(t-1)}(\boldsymbol{\theta}_{i,0})-g_i^{(t-1)}(\widetilde{\boldsymbol{\theta}}_i^{(t)}))| \quad \text{(by \eqref{eq_g_t_G_t})} \\
&\leq L_{\mu_i} \lVert \mathbf{p} \rVert_{G_i^{(t-1),-1}(\boldsymbol{\theta}_{i,0},\widetilde{\boldsymbol{\theta}}_i^{(t)})} \lVert g_i^{(t-1)}(\boldsymbol{\theta}_{i,0})-g_i^{(t-1)}(\widetilde{\boldsymbol{\theta}}_i^{(t)})
\rVert_{G_i^{(t-1),-1}(\boldsymbol{\theta}_{i,0},\widetilde{\boldsymbol{\theta}}_i^{(t)})} \quad \text{(by Cauchy-Schwartz inequality)}\\
&\leq \frac{L_{\mu_i}}{c_{\mu_i}}  \lVert \mathbf{p} \rVert_{V_{i}^{(t-1),-1}} \lVert g_i^{(t-1)}(\boldsymbol{\theta}_{i,0})-g_i^{(t-1)}(\widetilde{\boldsymbol{\theta}}_i^{(t)})\rVert_{V_{i}^{(t-1),-1}} \quad \text{(by \eqref{eq_G_t})} \\
&  \leq \frac{2 L_{\mu_i}}{c_{\mu_i}}  \lVert \mathbf{p} \rVert_{V_{i}^{(t-1),-1}} \lVert g_i^{(t-1)}(\boldsymbol{\theta}_{i,0})-g_i^{(t-1)}(\widehat{\boldsymbol{\theta}}_i^{(t)}
)\rVert_{V_{i}^{(t-1),-1}} \quad \text{(by definition of $\widetilde{\boldsymbol{\theta}}_i^{(t)}$)}.
\end{align*}

Now we calculate

\begin{align*}
\lVert g_i^{(t-1)}(\boldsymbol{\theta}_{i,0})-g_i^{(t-1)}(\widehat{\boldsymbol{\theta}}_i^{(t)}
)\rVert_{V_{i}^{(t-1),-1}}
&\leq
\left\| \sum_{s= 1}^{t-1}  \mu_i(\mathbf{p}^{(s)\top} \boldsymbol{\theta}_{i,0}) \mathbf{p}^{(s)}
+ \lambda \boldsymbol{\theta}_{i,0} - \sum_{s= 1}^{t-1} \mathbf{p}^{(s)} y_i^{(s)}  
\right\|_{V_{i}^{(t-1),-1}}  \\
&\leq \left\|
-\sum_{s= 1}^{t-1} \mathbf{p}^{(s)} \eta_i^{(s)}  + \lambda \boldsymbol{\theta}_{i,0}
\right\|_{V_{i}^{(t-1),-1}}  \\
&= 
\left\| -S_i^{(t-1)}  + \lambda \boldsymbol{\theta}_{i,0} \right\|_{V_{i}^{(t-1),-1}} \\
&\leq  \lVert  S_i^{(t-1)} \rVert_{V_{i}^{(t-1),-1}}  + \lVert \lambda \boldsymbol{\theta}_{i,0} \rVert_{V_{i}^{(t-1),-1}} \quad \text{(Triangle inequality)}\\
&\leq \lVert  S_i^{(t-1)} \rVert_{V_{i}^{(t-1),-1}}  + \sqrt{\lambda c_{\mu_i}} \lVert \boldsymbol{\theta}_{i,0} \rVert_{2}  \quad \left(V_{i}^{(t-1)} \geq \frac{\lambda}{c_{\mu_i}} I_N\right) \\
&\leq \underset{=\mathrm{c}_i^{(t-1)}(\delta)}{\underbrace{L_{\mu_i} \sqrt{2\log(T/\delta) + N \log\left( 1 + \frac{B_p^2 (t-1)}{N\lambda}\right)}}}+ B_{\theta_i}\sqrt{\lambda c_{\mu_i}} \quad (\text{w.p.}\geq 1-\delta)  .
\end{align*}
   
In the last inequality, we have used the assumption that $\lVert \boldsymbol{\theta}_{i,0} \rVert_2 \leq B_{\theta_i}$ and the concentration result established in Proposition 7 of \citet{russac2019weighted} for the self-normalized quantity $\| S_i^{(t-1)} \|_{V_{i}^{(t-1),-1}}$ (which can be applied thanks to the conditional $L_{\mu_i}$-subgaussianity of $\eta_i^{(t)}$ established in \cref{lemma:subgaussian_noise}\footnote{In \citet{russac2019weighted}, the subgaussianity of the error term is the sole assumption required to prove their Proposition 5, which in turn implies Proposition 7.}), which establishes that

$$
\mathbb{P}\left(\exists t \leq T,\|S_i^{(t)}\|_{V^{(t),{-1}}} \geq \mathrm{c}^{(t)}_i(\delta)\right) \leq \delta, \quad \text{equivalently}\quad \mathbb{P}\left(\|S_i^{(t)}\|_{V^{(t),{-1}}} \leq \mathrm{c}^{(t)}_i(\delta), \forall t \leq T\right) \geq 1-\delta.
$$

We established that, for any $\mathbf{p} \in \mathcal{P}$, 
$$
\mathbb{P}\left(|\operatorname{rev}_i(p_i | \mathbf{p}_{-i}) - \widehat{ \operatorname{rev}}_i^{(t)}(p_i  | \mathbf{p}_{-i})| \leq p_i \lVert \mathbf{p} \rVert_{V_{i}^{(t-1),-1}} \cdot \underset{=\rho_i^{(t-1)}(\delta)}{\underbrace{\frac{2 L_{\mu_i}}{c_{\mu_i}}\left(\mathrm{c}_i^{(t-1)}(\delta)+B_{\theta_i}\sqrt{\lambda c_{\mu_i}}\right)}},\quad \forall t \leq T \right) \geq 1-\delta.
$$

This completes the proof.

\section{Proof of Theorem \ref{theorem_regret_SW}}
We recall the definitions
$$
\operatorname{rev}_i(p_i | \mathbf{p}_{-i}) = p_i\mu_i( -\beta_{i,0}p_i+\langle \boldsymbol{\gamma}_{i,0},\mathbf{p}_{-i}\rangle), \qquad \forall (p_i , \mathbf{p}_{-i}) \in \mathcal{P},
$$
and
$$
\widehat{ \operatorname{rev}}_i^{(t)}(p_i | \mathbf{p}_{-i}) =  p_i\mu_{i}( -\widetilde{\beta}_i^{(t)\nML}p_i+\langle\widetilde{\boldsymbol{\gamma}}_i^{(t)\nML}, \mathbf{p}_{-i}\rangle), \qquad \forall (p_i , \mathbf{p}_{-i}) \in \mathcal{P}.
$$
For simplicity of notation, we also define
$$
\textstyle p_i^{(t)\star} =\Gamma_i(\mathbf{p}^{(t)}_{-i})=  \underset{p_i \in \mathcal{P}_i}{\operatorname{argmax}} \, \operatorname{rev}_i(p_i | \mathbf{p}_{-i}^{(t)}) , \qquad \forall  \mathbf{p}_{-i} ^{(t)}\in \mathcal{P}_{-i},
$$
and 
$$
\textstyle p_i^{(t)} =\widehat{\Gamma}^{(t)}_i(\mathbf{p}^{(t-1)}_{-i}) \in  \underset{p_i \in \mathcal{P}_i}{\operatorname{argmax}} \, \text{UCB}_i^{(t)}(\cdot |\mathbf{p}_{-i}^{(t-1)}) ,\qquad \forall  \mathbf{p}_{-i} ^{(t-1)}\in \mathcal{P}_{-i},
$$
where we recall from \cref{def:UCB_conditioned} that
\begin{align*}
\mathrm{UCB}_i^{(t)}(p_i| \mathbf{p}_{-i}^{(t-1)})= \widehat{ \operatorname{rev}}_i^{(t)}(p_i | \mathbf{p}_{-i}^{(t-1)}) +\sigma_i^{(t-1)}(p_i,\delta).
\end{align*}
\noindent
The goal is to upper bound the individual regret
$$
\operatorname{Reg}_i(T)=\mathbb{E} \sum_{t=1}^T R_i^{(t)} =\mathbb{E} \sum_{t=1}^T[  \operatorname{rev}_i(\Gamma_i(\mathbf{p}^{(t)}_{-i}) | \mathbf{p}_{-i}^{(t)}) -  \operatorname{rev}_i(p^{(t)}_i | \mathbf{p}_{-i}^{(t)}) ].
$$ 
We have
\begin{align}\label{eq:rev_diff}
R_i^{(t)} = \operatorname{rev}_i(\Gamma_i(\mathbf{p}^{(t)}_{-i}) | \mathbf{p}_{-i}^{(t)}) -  \operatorname{rev}_i(p^{(t)}_i | \mathbf{p}_{-i}^{(t)}) \leq C_{i,1}(\Gamma_i(\mathbf{p}^{(t)})-p_i^{(t)})^2,
\end{align}
where we used that $R_i^{(t)}\leq M_i$ and that
\begin{align*}
|\partial^2_{p^2}\operatorname{rev}_i(p | \mathbf{p}_{-i}^{(t)})_{|p=p'}| &= |-2\beta_{i,0}\mu_i'(-\beta_{i,0}p'+\boldsymbol{\gamma}_{i,0}^{\top}\mathbf{p}_{-i}^{(t)}) +\beta_{i,0}^2p'\mu_i''(-\beta_{i,0}p'+\boldsymbol{\gamma}_{i,0}^{\top}\mathbf{p}_{-i}^{(t)})|\\
&\leq 2\beta_{i,0}L_{\mu_i}+\beta_{i,0}^2\bar{p_i}B''_i \\
&\leq 2\overline{\beta}_iL_{\mu_i}+\overline{\beta}_i^2\bar{p_i}B''_i\triangleq C_{i,1},
\end{align*}
for some $p'$ in the segment between the points $p_i^{(t)}$ and $\Gamma_i(\mathbf{p}^{(t)}_{-i})$. Continuing from \eqref{eq:rev_diff} we have
\begin{align}\label{eq:rev_diff_2}
(\Gamma_i(\mathbf{p}^{(t)}_{-i})-p_i^{(t)})^2
&\leq \|\boldsymbol{\Gamma}(\mathbf{p}^{(t)})-\mathbf{p}^{(t)}\|_2^2\nonumber\\
&\leq 2\|\boldsymbol{\Gamma}(\mathbf{p}^{(t)})-\mathbf{p}^{\star}\|_2^2 +2\|\mathbf{p}^{\star}-\mathbf{p}^{(t)}\|_2^2\nonumber\\
&= 2(\|\boldsymbol{\Gamma}(\mathbf{p}^{(t)})-\boldsymbol{\Gamma}(\mathbf{p}^{\star})\|_2^2 +\|\mathbf{p}^{\star}-\mathbf{p}^{(t)}\|_2^2), \qquad &[\boldsymbol{\Gamma}(\mathbf{p}^{\star})=\mathbf{p}^{\star} \text{ by }\cref{strong-concavity-rev}]\nonumber\\
&\leq 2(L_{\boldsymbol{\Gamma}}\|\mathbf{p}^{(t)}-\mathbf{p}^{\star}\|_2^2 +\|\mathbf{p}^{\star}-\mathbf{p}^{(t)}\|_2^2) \quad &[\text{by }\cref{strong-concavity-rev}]\nonumber\\
&\leq 2(L_{\boldsymbol{\Gamma}}+1)\|\mathbf{p}^{(t)}-\mathbf{p}^{\star}\|_2^2.
\end{align}

\paragraph{STEP 1: Bound for $\|\mathbf{p}^{(t)}-\mathbf{p}^{\star}\|_2^2$.}
Consider the event
\begin{equation}\label{set:G}
\mathcal{G}=\cap_{t \in [T]}\left\{\|\widehat{\boldsymbol{\Gamma}}^{(t)}(\mathbf{p}^{(t-1)})-\boldsymbol{\Gamma}(\mathbf{p}^{(t-1)})\|_2^2 \leq C_2 \sigma^{(t-1)}(\delta)\right\},
\end{equation}
that has probability at least $1-2N\delta$ by \cref{prop_SW_GLM_anytime_upper}. Continuing from \eqref{eq:rev_diff_2}, we get that, on $\mathcal{G}$,

\begin{align}\label{ineq:R_i_upper}
\|\mathbf{p}^{(t)}-\mathbf{p}^{\star}\|_2 &= \|\widehat{\boldsymbol{\Gamma}}^{(t)}(\mathbf{p}^{(t-1)})-\mathbf{p}^{\star}\|_2 \nonumber\\
&\overset{(\star)}{\leq} \|\widehat{\boldsymbol{\Gamma}}^{(t)}(\mathbf{p}^{(t-1)})-\boldsymbol{\Gamma}(\mathbf{p}^{(t-1)})\|_2+\|\boldsymbol{\Gamma}(\mathbf{p}^{(t-1)})-\boldsymbol{\Gamma}(\mathbf{p}^{\star})\|_2\nonumber \\
&\leq \sqrt{C_2\sigma^{(t-1)}(\delta)} +L_{\boldsymbol{\Gamma}}\|\mathbf{p}^{(t-1)}-\mathbf{p}^{\star}\|_2\nonumber \\
&\leq \sqrt{C_2\sigma^{(t-1)}(\delta)}+L_{\boldsymbol{\Gamma}}\left(\sqrt{C_2\sigma^{(t-2)}(\delta)}+L_{\boldsymbol{\Gamma}}\|\mathbf{p}^{(t-2)}-\mathbf{p}^{\star}\|_2\right)\nonumber \\
&= \sqrt{C_2}\left(\sqrt{\sigma^{(t-1)}(\delta)} +L_{\boldsymbol{\Gamma}}\sqrt{\sigma^{(t-2)}(\delta)}\right)+L^2_{\boldsymbol{\Gamma}}\|\mathbf{p}^{(t-2)}-\mathbf{p}^{\star}\|_2\nonumber \\
&=  \sqrt{C_2}\left(\sum_{j=0}^{\ell}L^{j}_{\boldsymbol{\Gamma}} \sqrt{\sigma^{(t-j-1)}(\delta)}\right)+L^{\ell}_{\boldsymbol{\Gamma}}\|\mathbf{p}^{(t-\ell)}-\mathbf{p}^{\star}\|_2, \qquad \ell \in \{1,\dots, t-1\} \nonumber \\
&\leq \sqrt{C_2}\left(\sum_{j=0}^{t-1}L^{j}_{\boldsymbol{\Gamma}} \sqrt{\sigma^{(t-j-1)}(\delta)}\right)+L^{t-1}_{\boldsymbol{\Gamma}}\|\mathbf{p}^{(0)}-\mathbf{p}^{\star}\|_2.
\end{align}
Summing over $t$ we obtain, on $\mathcal{G}$, the following inequality

\begin{align}\label{ineq:NE_convergence}
\sum_{t=1}^T\|\mathbf{p}^{(t)}-\mathbf{p}^{\star}\|_2^2 & \leq \sum_{t=1}^T \left\{2C_2\left(\sum_{j=0}^{t-1} L^{j}_{\boldsymbol{\Gamma}}\sqrt{\sigma^{(t-j-1)}(\delta)}\right)^2+2L^{2(t-1)}_{\boldsymbol{\Gamma}}\|\mathbf{p}^{(0)}-\mathbf{p}^{\star}\|_2^2 \right\}\nonumber\\
&\overset{(\star\star)}{\leq} \frac{2 C_2}{(1-L_{\boldsymbol{\Gamma}})^2}\sum_{t=0}^{T-1} \sigma^{(t)} (\delta)+2\left(\sum_{t=1}^TL^{2(t-1)}_{\boldsymbol{\Gamma}}\right)\|\mathbf{p}^{(0)}-\mathbf{p}^{\star}\|_2^2 \nonumber\\
&\leq \frac{2 C_2}{(1-L_{\boldsymbol{\Gamma}})^2}\sum_{t=0}^{T-1} \sigma^{(t)}(\delta)  + \frac{2\|\mathbf{p}^{(0)}-\mathbf{p}^{\star}\|_2^2}{1-L^{2}_{\boldsymbol{\Gamma}}}\nonumber \\
&= C_4 \sum_{t=0}^{T-1} \sum_{j \in [N]} 2\rho_j^{(t)}(\delta) p_j^{(t+1)}\| (p_j^{(t+1)},\mathbf{p}_{-j}^{(t)} )\|_{V_{j}^{(t),-1}} + C_3,
\end{align}

where $C_3 = 2 \frac{\|\mathbf{p}^{(0)}-\mathbf{p}^{\star}\|_2^2}{1-L^{2}_{\boldsymbol{\Gamma}}}$ and $C_4=\frac{2 C_2}{(1-L_{\boldsymbol{\Gamma}})^2}$. In $(\star\star)$ we used the following \cref{lemma:tecnincal_summation}, by substituting $q\leftarrow L_{\Gamma}$ and $a_k\leftarrow \sqrt{\sigma^{(k)}(\delta)}$ (the proof is deferred to \cref{sec:lemma:tecnincal_summation}).

\begin{lemma}\label{lemma:tecnincal_summation}
Let $(a_k)_{k\ge 0}$ be any real sequence, $0<q<1$, and $T\ge 1$. Then
\[
\sum_{t=1}^T\Biggl(\sum_{j=0}^{t-1} q^j a_{t-j-1}\Biggr)^2
 \le  \frac{1}{(1-q)^2} \sum_{k=0}^{T-1} a_k^2.
\]
\end{lemma}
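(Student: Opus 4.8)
The plan is to recognize the inner sum as a one–sided discrete convolution of the sequence $(a_k)$ against the geometric kernel $(q^j)_{j\ge 0}$, and to bound the resulting $\ell^2$ norm by the $\ell^1$ norm of the kernel times the $\ell^2$ norm of $(a_k)$ — the discrete analogue of Young's convolution inequality. Since $\sum_{j\ge 0}q^j=\tfrac{1}{1-q}$, one expects the factor $\tfrac{1}{1-q}$ to appear twice, once from a Cauchy–Schwarz step and once from a Fubini/reindexing step, producing the claimed $\tfrac{1}{(1-q)^2}$. Rather than invoke Young's inequality directly (which keeps the argument self-contained), I would carry out the two elementary steps below.

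First I would apply Cauchy–Schwarz to each inner sum after splitting the weight as $q^j=q^{j/2}\cdot q^{j/2}$:
\[
\Bigl(\sum_{j=0}^{t-1} q^j a_{t-j-1}\Bigr)^2
\le\Bigl(\sum_{j=0}^{t-1} q^j\Bigr)\Bigl(\sum_{j=0}^{t-1} q^j a_{t-j-1}^2\Bigr)
\le\frac{1}{1-q}\sum_{j=0}^{t-1} q^j a_{t-j-1}^2,
\]
using $\sum_{j=0}^{t-1}q^j\le\tfrac{1}{1-q}$. Summing over $t\in\{1,\dots,T\}$ reduces the claim to bounding the double sum $\sum_{t=1}^T\sum_{j=0}^{t-1} q^j a_{t-j-1}^2$.

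Next I would swap the order of summation via the substitution $k=t-j-1$: for fixed $k\in\{0,\dots,T-1\}$ the pair $(t,j)$ contributes exactly when $t$ ranges over $\{k+1,\dots,T\}$ with $j=t-k-1$, so that
\[
\sum_{t=1}^T\sum_{j=0}^{t-1} q^j a_{t-j-1}^2
=\sum_{k=0}^{T-1} a_k^2 \sum_{t=k+1}^{T} q^{\,t-k-1}
=\sum_{k=0}^{T-1} a_k^2\sum_{m=0}^{T-k-1} q^{m}
\le\frac{1}{1-q}\sum_{k=0}^{T-1} a_k^2.
\]
Combining the two displayed bounds yields the factor $\tfrac{1}{(1-q)^2}$ and completes the proof.

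The argument involves no genuine obstacle; it is a routine weighted Cauchy–Schwarz estimate followed by an interchange of summation. The only point demanding care is the bookkeeping of index ranges in the reindexing $k=t-j-1$ — verifying that the constraints $0\le j\le t-1$ and $1\le t\le T$ translate exactly into $k+1\le t\le T$ for each fixed $k$ — together with the observation that every partial geometric sum is bounded uniformly by $\tfrac{1}{1-q}$ regardless of its (finite) length, which holds because $0<q<1$.
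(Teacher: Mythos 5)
Your proof is correct, and it takes a genuinely different route from the paper's. You implement the discrete Young inequality $\|k * a\|_{\ell^2} \le \|k\|_{\ell^1}\|a\|_{\ell^2}$ by hand: a weighted Cauchy--Schwarz step (splitting $q^j = q^{j/2}\cdot q^{j/2}$) extracts one factor of $\tfrac{1}{1-q}$, and an interchange of summation via $k = t-j-1$ extracts the second; your index bookkeeping in that interchange is exact. The paper instead expands the square directly, reindexes both inner sums to obtain the double sum $\sum_{k,m} a_k a_m\, q^{|k-m|}$ weighted by $\tfrac{1}{1-q^2}$, and then bounds the operator norm of the Toeplitz kernel matrix $K_{k,m}=q^{|k-m|}$ by $\tfrac{1+q}{1-q}$ via the Schur test, recovering $\tfrac{1}{1-q^2}\cdot\tfrac{1+q}{1-q}=\tfrac{1}{(1-q)^2}$. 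Both arguments yield the identical constant. Yours is shorter and more elementary, needing nothing beyond Cauchy--Schwarz and Fubini; the paper's quadratic-form viewpoint is heavier but makes the structure of the bilinear interaction explicit, which would be the natural starting point if one wanted a sharper constant from the exact spectral norm of the Kac--Murdock--Szeg\H{o} matrix rather than the Schur-test estimate.
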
 

\paragraph{STEP 2: Regret Bound.}
Consider the set $\mathcal{G}$ defined in \eqref{set:G} and the partitioning of the sampling space: $\mathcal{G} \cup \mathcal{G}^c$. We can write
$$
\begin{aligned}
\mathbb{E}\left[R_i^{(t)}\right] & =\mathbb{E}\left[R_i^{(t)} \cdot\left(\mathbb{I}(\mathcal{G})+\mathbb{I}\left(\mathcal{G}^c\right)\right)\right]
\\
&\leq\mathbb{E}\left[R_i^{(t)} \cdot\mathbb{I}(\mathcal{G})\right] + M_i \mathbb{P}\left(\mathcal{G}^c\right)
\\
&\leq\mathbb{E}\left[R_i^{(t)} \cdot\mathbb{I}(\mathcal{G})\right] + M_i 2N \delta.
\end{aligned}
$$

Where $M_i$ is an upper bound of $R_i^{(t)}$. For example, using that $0\leq \mu_i \leq B_i$ and $0 \leq p_i\leq \bar{p}_i$, we can take $M_i = 2 B_i \bar{p}_i$. Then 
$$
\begin{aligned}
\mathbb{E}\left[\sum_{t=1}^T R_i^{(t)}\right] \leq \mathbb{E}\left[\sum_{t=1}^T R_i^{(t)} \cdot\mathbb{I}(\mathcal{G})\right] + M_i 2N \delta T.
\end{aligned}
$$

Now, choosing $\delta = 1/T^{\gamma}$ with $\gamma> 1$, the second term converges to zero, so we only need to bound the first term on $\mathcal{G}$. Merging \eqref{eq:rev_diff} and \eqref{eq:rev_diff_2} we have that, on $\mathcal{G}$, up to a multiplicative constant $C_4$ and an additive constant $C_3$

\begin{align}\label{ineq:regret-nash}
\operatorname{Reg}_i(T) = \mathbb{E}\left[\sum_{t=1}^T R_i^{(t)} \cdot\mathbb{I}(\mathcal{G})\right] & \lesssim \mathbb{E}\left[\sum_{t=1}^T\|\mathbf{p}^{(t)}-\mathbf{p}^{\star}\|_2^2 \cdot\mathbb{I}(\mathcal{G})\right]\nonumber\\
& \overset{\eqref{ineq:NE_convergence}}{\lesssim} \mathbb{E}\left[\sum_{t=0}^{T-1} \sum_{j \in [N]} \rho_j^{(t)}(\delta)p_j^{(t+1)} \| (p_j^{(t+1)},\mathbf{p}_{-j}^{(t)} )\|_{V_{j}^{(t),-1}} \cdot\mathbb{I}(\mathcal{G})\right]\nonumber \\
& \leq \bar{p} \,\mathbb{E}\left[\sum_{t=0}^{T-1} \sum_{j \in [N]} \rho_j^{(t)}(\delta) \| (p_j^{(t+1)},\mathbf{p}_{-j}^{(t)} )\|_{V_{j}^{(t),-1}}\cdot\mathbb{I}(\mathcal{G})\right],
\end{align}

where $\bar{p}= \max_{j \in [N]}\bar{p}_j >0$ and where we used that $\rho_j^{(t)}$ (defined in \eqref{def:UCB}) is increasing in $t$. Moreover, for $\delta = 1/T^{\gamma}$ with $\gamma > 1$, (retaining the dependence on $\lambda,N$ and $T$), we have 
\begin{align*}
\rho_j^{(t)}(\delta) = O\left( \mathrm{c}^{(t)}_i(\delta)+\sqrt{\lambda}\right)&=O\left(\sqrt{\lambda}+L_{\mu_i} \sqrt{2 \log \left(\frac{T}{\delta}\right)+N \log \left(1+\frac{B_p^2 T}{N \lambda}\right)}\right)\\
& = O\left(\sqrt{\lambda}+\sqrt{\log(T)+N \log \left(1+\frac{T}{N\lambda}\right)}\right).
\end{align*}

It only remains to bound 
$$
\sum_{t=0}^{T-1} \sum_{j \in [N]} \| (p_j^{(t+1)},\mathbf{p}_{-j}^{(t)} )\|_{V_{j}^{(t),-1}},
$$
for which we use the following \cref{lemma:Elliptical-Lemma}, proved in \cref{subsec:lemma:Elliptical-Lemma}.

\begin{lemma}[Variant of the Elliptical Potential Lemma]\label{lemma:Elliptical-Lemma} Retaining only the dependence on $\lambda, N$ and $T$, it holds for $\lambda_j \geq c_{\mu_j} B_p^2$
$$
\sum_{t=0}^{T-1} \sum_{j \in [N]} \| (p_j^{(t+1)},\mathbf{p}_{-j}^{(t)} )\|_{V_{j}^{(t),-1}} = O \left(N^{3/2}\sqrt{T \log \left(\frac{T}{N \lambda}+1\right)}\right).
$$
\end{lemma}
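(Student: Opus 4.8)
The plan is to bound the double sum seller by seller, isolating the single source of difficulty: in the inner summand $\|\mathbf q_j^{(t)}\|_{V_j^{(t),-1}}$ with $\mathbf q_j^{(t)} := (p_j^{(t+1)},\mathbf p_{-j}^{(t)})$, the own coordinate is taken one step \emph{ahead} of the competitor coordinates, so $\mathbf q_j^{(t)}$ is neither $\mathbf p^{(t)}$ nor $\mathbf p^{(t+1)}$, although $V_j^{(t)}$ already contains both. First I would write $\mathbf q_j^{(t)} = \mathbf p^{(t)} + (p_j^{(t+1)}-p_j^{(t)})\,\mathbf e_j$, where $\mathbf e_j$ is the $j$-th standard basis vector, and apply the triangle inequality in the $V_j^{(t),-1}$ norm. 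This splits the per-seller sum into a \emph{matched} part $\sum_t\|\mathbf p^{(t)}\|_{V_j^{(t),-1}}$ and a \emph{mismatch} part $\sum_t|p_j^{(t+1)}-p_j^{(t)}|\,\|\mathbf e_j\|_{V_j^{(t),-1}}$.

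For the matched part I would exploit that $\mathbf p^{(t)}$ is measured against the \emph{same-time} matrix $V_j^{(t)}=V_j^{(t-1)}+\mathbf p^{(t)}\mathbf p^{(t)\top}$. Since $\mathbf p^{(t)}\mathbf p^{(t)\top}\preceq V_j^{(t)}$, one gets $\|\mathbf p^{(t)}\|_{V_j^{(t),-1}}\le 1$ unconditionally (no lower bound on $\lambda$ is needed), and more precisely the determinant identity $\|\mathbf p^{(t)}\|_{V_j^{(t),-1}}^2 = 1-\det V_j^{(t-1)}/\det V_j^{(t)}\le \log\!\big(\det V_j^{(t)}/\det V_j^{(t-1)}\big)$ (with the convention $V_j^{(-1)}=\tfrac{\lambda}{c_{\mu_j}}I_N$). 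Telescoping and bounding $\det V_j^{(T-1)}$ by AM--GM on its trace gives $\sum_t\|\mathbf p^{(t)}\|_{V_j^{(t),-1}}^2\lesssim N\log\!\big(1+\tfrac{c_{\mu_j}B_p^2 T}{N\lambda}\big)$; a Cauchy--Schwarz step in $t$ then yields $\sum_t\|\mathbf p^{(t)}\|_{V_j^{(t),-1}}=O\!\big(\sqrt{NT\log(T/(N\lambda)+1)}\big)$, and summing over $j$ produces the announced $N^{3/2}\sqrt{T\log(T/(N\lambda)+1)}$ term. This part is routine and clean, and crucially avoids the usual $\lambda\gtrsim B_p^2$ caveat of the elliptical potential lemma.

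The mismatch part is the crux, and it is genuinely delicate: $\|\mathbf e_j\|_{V_j^{(t),-1}}$ is only bounded by $\sqrt{c_{\mu_j}/\lambda}$, and this is essentially tight whenever the own price co-moves with a competitor direction, since a collinear design leaves the $\mathbf e_j$ direction under-excited even as $V_j^{(t)}$ grows. Consequently a term-by-term estimate, or even a Cauchy--Schwarz split against $\sum_t(p_j^{(t+1)}-p_j^{(t)})^2=O(T)$, is only \emph{linear} in $T$ in the worst case; indeed for an arbitrary (non-algorithmic) collinear oscillating price path one can check that $\|\mathbf q_j^{(t)}\|_{V_j^{(t),-1)}}$ stays bounded away from $0$, so the bound cannot hold by linear algebra alone and must use the competitive dynamics. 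The intended route is to couple the two factors: a persistently large increment $p_j^{(t+1)}-p_j^{(t)}$ is only possible while the path is still moving, but the contraction of the best-response operator (\cref{lemma:contraction}, $L_{\boldsymbol\Gamma}<1$) forces $\mathbf p^{(t)}=\widehat{\boldsymbol\Gamma}^{(t)}(\mathbf p^{(t-1)})$ to track $\mathbf p^*$ and the own-price increments to decay geometrically up to the estimation fluctuation, exactly as in the iteration \eqref{ineq:R_i_upper}. Charging each increment against the exploration it induces when $\mathbf p^{(t+1)}$ enters $V_j^{(t+1)}$ — a one-dimensional, delayed elliptical-potential estimate in the $\mathbf e_j$ direction — is what I expect to deliver the $\sqrt{\lambda+T}$ and $N/\sqrt{\lambda}$ contributions.

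I therefore expect this coupled estimate to be the main obstacle, for two reasons. First, the mismatch genuinely fails for general sequences in $\mathcal P$, so the argument must invoke the contraction/convergence structure and cannot be purely potential-theoretic. Second, the decay of the increments is itself expressed through the confidence radii $\sigma^{(s)}(\delta)$, which contain the very norms being bounded, so one must avoid circularity — I would close the loop by proving the $\mathbf e_j$-direction potential bound as a deterministic consequence of the design growth on the high-probability event $\mathcal G$ from \eqref{set:G}, so that it may subsequently feed the regret bound of \cref{theorem_regret_SW} rather than depend on it.
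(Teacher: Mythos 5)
Your decomposition is exactly the paper's: write $(p_j^{(t+1)},\mathbf p_{-j}^{(t)})=\mathbf p^{(t)}+(p_j^{(t+1)}-p_j^{(t)})\mathbf e_j$, apply the triangle inequality in the $V_j^{(t),-1}$ norm, and split into a matched sum and a mismatch sum. Your treatment of the matched sum (determinant identity, telescoping, Cauchy--Schwarz in $t$) is correct and gives the same $N^{3/2}\sqrt{T\log(T/(N\lambda)+1)}$ term the paper obtains by citing Proposition 1 of \citet{carpentier2020elliptical}. The genuine gap is the mismatch sum: you never prove the required $O(N/\sqrt{\lambda}+N\sqrt{\lambda+T})$ bound. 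You only outline a strategy --- couple the decay of the increments $p_j^{(t+1)}-p_j^{(t)}$, coming from the contraction of $\boldsymbol\Gamma$, with a ``delayed'' one-dimensional potential estimate in the $\mathbf e_j$ direction --- and you yourself flag the circularity hazard (the increments are controlled by the radii $\sigma^{(t)}(\delta)$, which contain the very norms being bounded). As written, the proposal does not establish \cref{lemma:Elliptical-Lemma}.

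That said, your diagnosis of the mismatch term is sharper than the paper's own treatment, which is purely linear-algebraic and contains an error at precisely this step. The step labelled \textbf{Bound B} in the paper's proof asserts the PSD inequality $\sum_{s\le t}\mathbf p^{(s)}\mathbf p^{(s)\top}\succeq t\underline p^2\,\mathbf e_j\mathbf e_j^\top$ on the grounds that $(\mathbf x^\top\mathbf p^{(s)})^2\ge \underline p^2x_j^2$; but this scalar inequality holds only for sign-constrained $\mathbf x$ (e.g., entrywise nonnegative vectors), not for all $\mathbf x\in\mathbb R^N$: take $\mathbf x$ orthogonal to $\mathbf p^{(s)}$ with $x_j\neq0$, which exists whenever $N\ge2$ since $\mathbf p^{(s)}$ (having positive entries) is not parallel to $\mathbf e_j$. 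The downstream conclusion $[V_j^{(t),-1}]_{jj}\le(\lambda/c_{\mu_j}+t\underline p^2)^{-1}$ is likewise false: exactly as in your collinear example, if all prices are proportional to a fixed $\mathbf v$, Sherman--Morrison gives $[V_j^{(t),-1}]_{jj}\to c_{\mu_j}\|\mathbf v_{-j}\|^2/(\lambda\|\mathbf v\|^2)$, a constant of order $1/\lambda$ rather than $O(1/t)$, and an oscillating collinear path makes the mismatch sum $\Theta(T/\sqrt{\lambda})$, i.e., linear in $T$. So you are right that for arbitrary $\mathcal P$-valued sequences --- which is all the paper's proof ever uses about the prices --- the claimed bound cannot hold by linear algebra alone; any valid proof must exploit properties of the price path actually generated by the algorithm, which is the (unfinished) dynamics-based route you propose. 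In short: your proposal is incomplete at its key step, but that key step is also where the paper's published proof breaks down, and your counter-example pinpoints the error.
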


We then conclude that the regret, in terms of $\lambda, N$ and $T$ is

\begin{align}\label{eq:regret_dependence_on_lambda}
\operatorname{Reg}_i(T) = O\left(\left[\sqrt{\lambda}+\sqrt{\log(T)+N \log \left(1+\frac{T}{N\lambda}\right)}\right]\cdot N^{3/2}\sqrt{T \log \left(\frac{T}{N \lambda}+1\right)}\right).
\end{align}

For any $1\leq \lambda \leq O(N\log(T))$ we have

$$
\begin{aligned}
\operatorname{Reg}_i(T) = O\left(\sqrt{N\log(T)}\cdot N^{3/2}\sqrt{T\log(T)}\right)=O\left(N^{2}\sqrt{T}\log(T)\right).
\end{aligned}
$$

Note that from \eqref{ineq:regret-nash} we can also recover convergence to NE.

$$
\sum_{t=1}^T\mathbb{E}[\|\mathbf{p}^{(t)}-\mathbf{p}^{\star}\|_2^2] = O(N^{2}\sqrt{T}\log(T)).
$$

This completes the proof.

\section{Proof of \cref{lemma:tecnincal_summation}}\label{sec:lemma:tecnincal_summation}
We want to prove that for a given sequence $(a_k)_{k\ge 0}$ of real values, $0<q<1$, and $T\ge 1$, then
\[
\sum_{t=1}^T\Biggl(\sum_{j=0}^{t-1} q^j a_{t-j-1}\Biggr)^2
 \le  \frac{1}{(1-q)^2} \sum_{k=0}^{T-1} a_k^2.
\]
To show this, we expand the square and re-index with $k=t-j-1$ and $m=t-\ell-1$, which yields
\[
\sum_{t=1}^T\Biggl(\sum_{j=0}^{t-1} q^j a_{t-j-1}\Biggr)^2
= \sum_{k=0}^{T-1}\sum_{m=0}^{T-1} a_k a_m \sum_{t=\max\{k,m\}+1}^{T} q^{2t-k-m-2}.
\]
For $0<q<1$ we have
\[
\sum_{t=\max\{k,m\}+1}^{T} q^{2t-k-m-2}
= q^{ 2\max\{k,m\}-k-m}\cdot \sum_{r=0}^{T-\max\{k,m\}-1} q^{2r}
 \leq  \frac{q^{ 2\max\{k,m\}-k-m}}{1-q^2}.
\]
Hence
\[
\sum_{t=1}^T\Biggl(\sum_{j=0}^{t-1} q^j a_{t-j-1}\Biggr)^2
 \le  \frac{1}{1-q^2}\sum_{k=0}^{T-1}\sum_{m=0}^{T-1} q^{ 2\max\{k,m\}-k-m} a_k a_m.
\]
Observe that $q^{2\max\{k,m\}-k-m}=q^{|k-m|}$, so the right-hand side equals
\[
\frac{1}{1-q^2}  \mathbf{a}^\top K \mathbf{a}, \qquad
K_{k,m}=q^{|k-m|}, \qquad \mathbf{a}=(a_0,\dots,a_{T-1})^\top .
\]
By the Schur test,
\[
\|K\|_2  \le  \max_{k}\sum_{m=0}^{T-1} q^{|k-m|}
 \le  1 + 2\sum_{r=1}^{\infty} q^{r}
 =  \frac{1+q}{1-q}.
\]
Therefore,
\[
\sum_{t=1}^T\Biggl(\sum_{j=0}^{t-1} q^j a_{t-j-1}\Biggr)^2
 \le  \frac{1}{1-q^2} \|K\|_2 \|\mathbf{a}\|_2^2
 \le  \frac{1}{1-q^2}\cdot \frac{1+q}{1-q} \sum_{k=0}^{T-1} a_k^2
 =  \frac{1}{(1-q)^2} \sum_{k=0}^{T-1} a_k^2,
\]
since $1-q^2=(1-q)(1+q)$. This proves the claim.

\section{Proof of \cref{lemma:Elliptical-Lemma}}
\label{subsec:lemma:Elliptical-Lemma}
Recall from \eqref{eq_Design_matrix_SW} that
$$
V_{j}^{(t)} = \sum_{s=1}^{t} \mathbf{p}^{(s)} \mathbf{p}^{(s)\top} + \frac{\lambda}{c_{\mu_j}} I_N,
$$
First, note:
$$
(p_j^{(t+1)}, \mathbf{p}_{-j}^{(t)})=(p_j^{(t)}, \mathbf{p}_{-j}^{(t)})+(\underbrace{p_j^{(t+1)}-p_j^{(t)}}_{\Delta p_j^{(t)}}, \boldsymbol{0}_{-j}) = \mathbf{p}^{(t)}+ \Delta p_j^{(t)} \mathbf{e}_j,
$$
where $\mathbf{e}_j$ is the null vector with only entry $1$ in the $j$-th coordinate. So:
$$
\|(p_j^{(t+1)}, \mathbf{p}_{-j}^{(t)})\|_{V_j^{(t),-1}} \leq \|\mathbf{p}^{(t)}\|_{V_j^{(t),-1}}+\|\Delta p_j^{(t)} \mathbf{e}_j\|_{V_j^{(t),-1}}\leq \|\mathbf{p}^{(t)}\|_{V_j^{(t),-1}}+\sqrt{\Delta p_j^{(t)}[V_j^{(t),-1}]_{jj}\Delta p_j^{(t)}}.
$$
Note that $\sqrt{\Delta p_j^{(t)}[V_j^{(t),-1}]_{jj}\Delta p_j^{(t)}} = \|\Delta p_j^{(t)}\|_{[V_j^{(t),-1}]_{jj}}$ is a norm, hence, by triangula inequality 
$$
\sqrt{\Delta p_j^{(t)}[V_j^{(t),-1}]_{jj}\Delta p_j^{(t)}} = \|\Delta p_j^{(t)}\|_{[V_j^{(t),-1}]_{jj}}\leq \|p_j^{(t+1)}\|_{[V_j^{(t),-1}]_{jj}} + \|p_j^{(t)}\|_{[V_j^{(t),-1}]_{jj}} \leq \|\mathbf{p}^{(t+1)}\|_{V_j^{(t),-1}} + \|\mathbf{p}^{(t)}\|_{V_j^{(t),-1}}
$$
Hence
$$
\sum_{t=0}^{T-1} \sum_{j \in [N]}\|(p_j^{(t+1)}, \mathbf{p}_{-j}^{(t)})\|_{V_j^{(t),-1}}\leq 2\underset{=A}{\underbrace{\sum_{j \in [N]} \sum_{t=0}^{T-1}  \|\mathbf{p}^{(t)}\|_{V_j^{(t),-1}}}}+\underset{=B}{\underbrace{\sum_{j \in [N]} \sum_{t=0}^{T-1}\|\mathbf{p}^{(t+1)}\|_{V_j^{(t),-1}}}}.
$$
Note that

\paragraph{Bound B.}

Since $\|\mathbf{p}^{(t)}\|_2\leq B_p$, for $\lambda_j \geq c_{\mu_j} B_p^2$ we have $\mathbf{p}^{(t)}\mathbf{p}^{(t) \top} \preccurlyeq B_p^2 I \preccurlyeq  \frac{\lambda}{c_{\mu_j}} I$. Hence
$$
V_j^{(t+1)}= V_j^{(t)}+\mathbf{p}^{(t)}\mathbf{p}^{(t) \top} \preccurlyeq V_j^{(t)}+\frac{\lambda}{c_{\mu_j}} I \preccurlyeq V_j^{(t)} + \mathbf{p}^{(t-1)}\mathbf{p}^{(t-1) \top}+\frac{\lambda}{c_{\mu_j}} I \preccurlyeq 2 V_j^{(t)},
$$
which implies $V_j^{(t),-1} \preccurlyeq 2 V_j^{(t+1),-1}$, hence
$$
\sum_{j \in [N]} \sum_{t=0}^{T-1}  \|\mathbf{p}^{(t+1)}\|_{V_j^{(t),-1}} \leq \sqrt{2}\sum_{j \in [N]} \sum_{t=0}^{T-1}  \|\mathbf{p}^{(t+1)}\|_{V_j^{(t+1),-1}}
$$
i.e. $B \leq \sqrt{2} A$ (up to rescaling the sum $\sum_{t=0}^{T-1}$ to $\sum_{t=1}^T$, which only affects the final sum $\sum_{t=0}^{T-1} \sum_{j \in [N]}\|(p_j^{(t+1)}, \mathbf{p}_{-j}^{(t)})\|_{V_j^{(t),-1}}$ by a constant).

\paragraph{Bound A.}
We use the Elliptical Potential Lemma in \citet{carpentier2020elliptical} Proposition 1 (by replacing $p\leftarrow 1$ and $\lambda \leftarrow \frac{\lambda}{c_{\mu_j}}$):

\begin{align*}
A=\sum_{j \in [N]}\sum_{t=0}^{T-1} \| \mathbf{p}^{(t)}\|_{V_j^{(t),-1}} & \leq \sum_{j \in [N]} \sqrt{T N \log \left(\frac{c_{\mu_j}T}{N \lambda}+1\right)}\leq N \sqrt{T N \log \left(T\frac{c_{\mu}}{N \lambda}+1\right)}.
\end{align*}

\paragraph{Merging the two bounds.}
Retaining only the dependence on $\lambda, N$ and $T$, we recover 
$$
\sum_{t=0}^{T-1} \sum_{j \in [N]}\|(p_j^{(t+1)}, \mathbf{p}_{-j}^{(t)})\|_{V_j^{(t),-1}} = O\left(N\sqrt{T N \log \left(\frac{T}{N \lambda}+1\right)}\right).
$$

This completes the proof.

\section{Proof of \cref{lemma:subgaussian_noise}}
\label{subsec:hoeffding_cond}
We want to prove that, under \cref{assumption_mu}, for every $i \in [N]$, $\eta_i^{(t)}$ is $L_{\mu_i}$-subgaussian conditionally on $ \mathcal{H}_i^{(t-1)}$. Fix $i \in [N]$. We first prove that $\mathbb{E} [\eta_i^{(t)} | \mathcal{H}^{(t-1)}_i] =0$, where 
$$
\eta_i^{(t)} = y_i^{(t)} - \mu_i(\langle\mathbf{p}^{(t)}, \boldsymbol{\theta}_{i,0}\rangle).
$$
We recall that $p_i^{(t)}$ is computed using information $\mathcal{H}_i^{(t-1)} = \{(y_i^{(s)},\mathbf{p}^{(s)})\}_{s \leq t-1}$. Let 
$$
\mathcal{F}_i^{(t-1)}=\sigma\left( \mathcal{H}^{(t-1)}_i , \mathbf{p}^{(t)}\right).
$$
Since $\mathcal{H}_i^{(t-1)} \subseteq \mathcal{F}_i^{(t-1)}$, we have by the law of total expectation
\begin{align*}
\mathbb{E} [\eta_i^{(t)} | \mathcal{H}^{(t-1)}_i] &= \mathbb{E} [\mathbb{E} [\eta_i^{(t)} | \mathcal{F}^{(t-1)}_i]| \mathcal{H}^{(t-1)}_i] \\
&= \mathbb{E} [\mathbb{E} [y_i^{(t)} - \mu_i(\langle\mathbf{p}^{(t)}, \boldsymbol{\theta}_{i,0}\rangle) | \mathcal{F}^{(t-1)}_i]| \mathcal{H}^{(t-1)}_i]\\
&= \mathbb{E} [\mu_i(\langle\mathbf{p}^{(t)}, \boldsymbol{\theta}_{i,0}\rangle) - \mu_i(\langle\mathbf{p}^{(t)}, \boldsymbol{\theta}_{i,0}\rangle)| \mathcal{H}^{(t-1)}_i]\\
&=0.
\end{align*}
In the third equality we used that, by construction $y_i^{(t)}$ is independent of $\mathcal{H}_{i}^{(t-1)}$ given $\mathbf{p}^{(t)}$, i.e. $\mathbb{E}[y_i^{(t)}\mid \mathcal{H}_{i}^{(t-1)}, \mathbf{p}^{(t)}]=\mathbb{E}[y_i^{(t)}\mid\mathbf{p}^{(t)}] = \mu_i(\langle\mathbf{p}^{(t)}, \boldsymbol{\theta}_{i,0}\rangle)$. Now let $w_i^{(t)} \equiv\left\langle\boldsymbol{\theta}_{i, 0}, \mathbf{p}^{(t)}\right\rangle$ (this is $\mathcal{F}_i^{(t-1)}$-measurable), and recall from \eqref{eq:expon_model} that we model

\begin{equation*}
\frac{d\mathbb{P}_{\boldsymbol{\theta}_{i,0}}(y_i^{(t)} | \mathbf{p}^{(t)})}{d\nu_i(y_i^{(t)})} = \exp\left\{
y_i^{(t)} w_i^{(t)} - b_i(w_i^{(t)}) + c_i(y_i^{(t)})\right\},
\end{equation*}

Then, with $\mu_i(w_i^{(t)})=b_i^{\prime}(w_i^{(t)})$ and $\eta_i^{(t)}=y_i^{(t)}-\mu_i(w_i^{(t)})$,

$$
\begin{aligned}
\mathbb{E}\left[e^{\lambda \eta_i^{(t)}} | \mathcal{F}_i^{(t-1)}\right] & =e^{-\lambda \mu_i( w_i^{(t)})} \int e^{\lambda  y_i^{(t)}} \exp \left\{ y_i^{(t)}  w_i^{(t)}-b_i( w_i^{(t)})+c_i( y_i^{(t)})\right\} d \nu_i( y_i^{(t)}) \\
& =e^{-\lambda \mu_i( w_i^{(t)})} e^{-b_i( w_i^{(t)})} \int \exp \left\{ y_i^{(t)}( w_i^{(t)}+\lambda)+c_i( y_i^{(t)})\right\} d \nu_i( y_i^{(t)}) \\
& =e^{-\lambda \mu_i( w_i^{(t)})} e^{-b_i( w_i^{(t)})} e^{b_i( w_i^{(t)}+\lambda)}=\exp \left\{b_i( w_i^{(t)}+\lambda)-b_i( w_i^{(t)})-b_i^{\prime}( w_i^{(t)}) \lambda\right\}.
\end{aligned}
$$

Taking logs gives the identity
\[
\log \mathbb E \left[\exp \big(\lambda \eta_i^{(t)}\big) \middle|  \mathcal{F}_i^{(t-1)}\right]
= b_i( w_i^{(t)}+\lambda)-b_i( w_i^{(t)})-b_i^{\prime}( w_i^{(t)}) \lambda.
\]
By Taylor’s expansion, there exists a $\xi \in [0,1]$ such that
\[
b_i( w_i^{(t)}+\lambda)-b_i( w_i^{(t)})-b_i^{\prime}( w_i^{(t)}) \lambda
= \tfrac{1}{2} b_i''( w_i^{(t)}+\xi\lambda) \lambda^2
= \tfrac{1}{2} \mu_i'( w_i^{(t)}+\xi\lambda) \lambda^2.
\] Assumption~\ref{assumption_mu} yields $\mu_i'(\cdot)\le L_{\mu_i}$ on $\mathbb{R}$, hence
\[
\log \mathbb E \left[\exp \big(\lambda \eta_i^{(t)}\big) \middle|  \mathcal{F}_i^{(t-1)}\right]
\leq \frac{L_{\mu_i} \lambda^2}{2} \quad \implies \quad \mathbb E \left[\exp \big(\lambda \eta_i^{(t)}\big) \middle|  \mathcal{F}_i^{(t-1)}\right]\leq \exp \left( \frac{L_{\mu_i} \lambda^2}{2} \right).
\]
And consequently 
\begin{equation*}
\mathbb E \left[\exp \big(\lambda \eta_i^{(t)}\big) \middle| \mathcal{H}_i^{(t-1)}\right]=\mathbb E \left[\mathbb E \left[\exp \big(\lambda \eta_i^{(t)}\big) \middle| \mathcal{F}_i^{(t-1)}\right]\middle| \mathcal{H}_i^{(t-1)}\right]
\leq \exp \left(\tfrac{L_{\mu_i}\lambda^2}{2}\right)
\quad\text{a.s. for all $\lambda \in \mathbb{R}$}.
\end{equation*}
This proves that \(\eta_i^{(t)}\) is conditionally sub-gaussian with variance proxy \(L_{\mu_i}\). This completes the proof.

\end{document}